\newcommand{\Iter}{\ensuremath{\mathcal{I}}\xspace}
\newcommand{\Jter}{\ensuremath{\mathcal{I}'}\xspace}
\newcommand{\Rec}{\ensuremath{\mathcal{R}}\xspace}
\newcommand{\mpt}{\ensuremath{\mathsf{MPT}}\xspace}
\newcommand{\spt}{\ensuremath{\mathsf{SPT}}\xspace}
\newcommand{\opt}{\ensuremath{\mathsf{OPT}}\xspace}
\newcommand{\otm}{\ensuremath{\mathsf{OTM}}\xspace}
\newcommand{\otms}{\ensuremath{\mathsf{OTM}}s\xspace}
\newcommand{\p}{\ensuremath{\mathsf{P}}\xspace}
\newcommand{\str}{\mathbf}
\newcommand{\lrn}{\ensuremath{\mathsf{lrn}}\xspace}
\newcommand{\redP}{\preceq_{\lambda\p}}
\newcommand{\equivP}{\equiv_{\lambda\p}} 
\renewcommand{\max}{\ensuremath{\mathsf{max}}\xspace}
\newcommand{\cut}{\mathbin{\scriptstyle\dot{\smash{\textstyle-}}}}
\newcommand\OMIT[1]{}
\newcommand{\sdzero}{\texttt0}
\newcommand{\sdone}{\texttt1}
\newcommand{\demph}{\emph}
\newcommand{\lmin}{\mathsf{lmin}}
\newtheorem{theorem}{Theorem}[section]
\newtheorem{lemma}[theorem]{Lemma}
\newtheorem{cor}[theorem]{Corollary}
\theoremstyle{definition}
\newtheorem{definition}[theorem]{Definition}
\title{Type-two Iteration with Bounded Query Revision}
\author{Bruce M. Kapron\thanks{Research supported in part by an NSERC Discovery Grant}\\
University of Victoria\\
Victoria, BC, Canada\\
\texttt{bmkapron@uvic.ca}
\and
Florian Steinberg\\
INRIA\\
Saclay, \^{I}le-de-France\\
\texttt{florian.steinberg@inria.fr}
}
\begin{document}

\maketitle

\begin{abstract}
Motivated by recent results of Kapron and Steinberg (LICS 2018) we introduce new forms of iteration on length in the setting of applied lambda-calculi for higher-type poly-time computability.
In particular, in a type-two setting, we consider functionals which capture iteration on input length which bound interaction with the type-one input parameter, by restricting to a constant either the number of times the function parameter may return a value of increasing size, or the number of times the function parameter may be applied to an argument of increasing size.
We prove that for any constant bound, the iterators obtained are equivalent, with respect to lambda-definability over type-one poly-time functions, to the recursor of Cook and Urquhart which captures Cobham's notion of limited recursion on notation in this setting.
\end{abstract}

\section{Introduction}

Recursion on notation is a fundamental tool for syntactic characterizations of feasible computation, in particular capturing the notion of bounding the number of steps of a computation in terms of input size.
However, as a constraint, it is too weak on its own to capture feasibility as characterized by polynomial time computability.
A well-known example is the following: consider a function $\varphi$, mapping binary strings to binary strings, which for any input string $\str a$ returns the string concatenated with itself: $\varphi(\str a)=\str a\str a$.
The function $\varphi$ should clearly be accepted as feasible, but recursion on notation allows the definition of a new function which, on input $\str b$ of length $n$ returns $\varphi^{n}(\sdzero)$, which is a string of length $2^n$.
To capture feasibility through a recursion scheme, further restrictions are required to prevent this kind of exponential blowup.
Indeed, Cobham, in perhaps one of the earliest works mentioning polynomial-time computability, gives a characterization using a scheme of \emph{limited} recursion on notation \cite{Cobham}.
Here, definition of a new function through recursion on functions already known to be from the class is allowed only in case the length of the resulting function may be \emph{a priori} bounded by the length of a function already known to be in the class.
Cobham's approach is a canonical example of \emph{explicit} bounding.
It is also possible to formulate forms of limited recursion with \emph{implicit} bounding and recover the same class of polynomial time functions \cite{Leivant, BC}.

It is possible to consider feasibility in the \emph{type-two} setting, which allows computation with respect to an arbitrary function oracle.
The original definition of type-two polynomial time was given by Mehlhorn using a straightforward generalization of Cobham's scheme \cite{Mehlhorn}.
Just like the polynomial time functions, this class of functionals allows for a number of different characterizations and is accepted as capturing feasibility at type level two appropriately:
Cook and Urquhart gave a formulation of Mehlhorn's class, and in fact generalized it to all finite types by use of an applied typed lambda calculus with constant symbols for a collection of basic type-one poly-time functions, as well as a \emph{recursor} $\Rec$, which captures Mehlhorn's scheme as a type-two functional \cite{CU}.
Kapron and Cook showed that Mehlhorn's class may be characterized in terms of oracle Turing machines (\otms) whose run-time is bounded by a \emph{second-order polynomial} \cite{KC}.
Both of these characterizations have lead to a multitude of applications and further characterizations.

The content of this paper is inspired by a recent description of Mehlhorn's class given by Kapron and Steinberg \cite{KS}.
For this it is instructive to think of an analogue of unrestricted recursion on notation in the \otm setting.
Informally, this corresponds to Cook's notion of \emph{oracle polynomial time} (\opt) \cite{Cook}, which bounds the running time of \otms by a polynomial in the size of the input and the largest answer returned by any call to the oracle.
Here, a higher time consumption can be justified by an increasing chain of oracle return values and in particular it is possible to recover the example above within \opt.
To force feasibility, Kapron and Steinberg use restrictions of \opt based on \emph{query-size revisions}.
They considered two forms of revision: a \emph{length revision} occurs when a query to the oracle returns an answer with size larger than the size of the input or the answer to any previous query, a \emph{lookahead revision} occurs when the size of a query provided to the oracle is larger than the size of any previous such query.
{\em Strong polynomial time} (\spt) allows only a constant number of length revisions, while {\em moderate polynomial time} (\mpt) allows only a constant number of lookahead revisions.
Kapron and Steinberg prove that both of the classes \spt and \mpt give proper subsets of Mehlhorn's class even when restricted to the functionals of the type that they are meant to capture, but that Mehlhorn's class can be recovered from each of the classes by closing under $\lambda$-abstraction and application.
It should be noted that length revisions and \spt make an earlier apparent in a somewhat different setting in work of Kawamura and Steinberg \cite{DBLP:conf/rta/KawamuraS17}.

The outline of this paper is as follows:
In the first section we describe the setting.
Namely we work in a simply typed lambda-calculus with constant symbols for all type-1 polynomial-time computable functions.
This is identical to the setting Cook and Urquhart chose for their characterization of higher-order polynomial time through the recursor $\Rec$ and means that we reason about higher-order complexity modulo the availability of the full strength of a first-order bounded recursion scheme.
The paper starts from the observation that the bounded recursor $\Rec$ is meant to model Mehlhorn's scheme, which is strictly more expressive than the first order scheme that is already available through the constants.
Clearly $\Rec$ adds something, as the class of functionals expressible without its presence has been classified by Seth and is considerably restricted in its access of the oracle \cite{MR1238294}.
Thus, one may ask for functionals that are less expressive and still generate the same class given the context.
Section \ref{Cook Urquhart} weakens $\Rec$ in two steps by first simplifying the way in which the bounding is done and afterwards by restricting the data that is available to the step-function.
This leaves us with a functional $\Iter$ that no longer captures bounded recursion but is better understood as doing bounded iteration.

Section \ref{length revision} starts involving the ideas of length revisions:
Inspired by the definitions of the classes \spt and \mpt we change the way in which iteration is bounded.
The new conditions intuitively provide more freedom than the direct bounding the iterator $\Iter$ uses and do so in a way that is somewhat orthogonal to how Cook and Urquhart's original recursor $\Rec$ did more complicated bounding.
We are lead to consider a family of operators $\Iter_k$ where the condition that is imposed becomes less restrictive as $k$ grows.
Over the chosen background theory, all of the operators $\Iter_k$ as well as $\Rec$ and $\Iter$ are of equal expressive power.
However, the parameter $k$ is tightly connected to runtime-bounds for $\Iter_k$ in the \otm setting, and the use of higher values should allow expressing some functionals that feature more complicated interaction with the oracle more concisely.
The proof that all considered operators are equivalent additionally covers a similarly defined family of iterators based on the idea of lookahead revisions that is introduced in Section~\ref{lookahead revision}.
The final section specifies an efficient generation scheme for the values of the new iterators.

Kapron and Steinberg define the classes \spt and \mpt using the \otm framework which is bound to a specific machine model.
This paper transfers the notions of length and lookahead revisions to the machine independent setting of iteration schemes, where the number of iterations is determined by the length of a specified input parameter (which is a string over some finite alphabet).
Our proofs introduce some interesting and useful idioms for programming in this setting.

\subsection{Preliminaries}
Let $\Sigma$ denote a finite alphabet that contains symbols $\sdzero$ and $\sdone$, and $\Sigma^*$ the set of finite strings over $\Sigma$.
The empty string is denoted $\epsilon$, and arbitrary elements of $\Sigma^*$ are denoted $\str a, \str b, \str c, \dots$.
We attempt to bind names of string variables to their meanings as far as possible:
$\str a$ is associated with initial values, $\str b$ with size-bounds $\str c$ with values that a recursion or iteration is carried out over and $\str t$ the previous values in a recursion or iteration.
For $\str a \in \Sigma^*$ let $|\str a|$ to denote the length of $\str a$ and $a_i$ its digits, i.e., $\str a=a_1\dots a_{|\str a|}$.
We write $\str b \subseteq \str a$ to indicate that $\str b$ is an initial segment of $\str a$.
We assume that we have symbols for all type-1 poly-time functions, for instance:
\begin{itemize}
\item {\em Truncation}:
  The 2-ary function sending $\str b$ and $\str c = c_1c_2\dots c_{|\str c|}$, to $\str c^{\leq |\str b|}:=c_1\dots c_{|\str b|}$, if $|\str b|\leq |\str c|$ and $\str c$ otherwise.
  Note that always $\str c ^{\leq |\str b|}\subseteq \str c$ and $\str c^{\leq |\varepsilon|}=\varepsilon$.
  For $\str c^{\leq |\str c|-1}$ we use the shorthand $\str c\gg1$.
\item {\em Tupling} and {\em projection} functions $\langle\cdot,\ldots,\cdot\rangle$ and $\pi_i$, such that tupling is monotone with respect to length in each argument.
  Namely if $|\str a_i|=|\str b_i|$ for all $i$ apart from $k$, then $|\str a_k| \leq |\str b_k|$ if and only if $|\langle \str a_1, \dots, \str a_n\rangle| \le |\langle \str b_1, \dots, \str b_n\rangle|$.
\item {\em Length minimum}: We adopt the convention used by Cook and Urquart, i.e.\
  \[ \lmin(\str c, \str b) := \begin{cases} \str c & \text{ if } |\str c| < |\str b| \\ \str b & \text{otherwise.} \end{cases} \]   
\end{itemize}
We also use definition by cases extensively, relying on the fact that there is a polynomial-time conditional and avoid over-use of $\lambda$-abstractions via explicit function definition.
Tupling functions that satisfy the demands above exist and are 1-1, but not bijective.
In spite of this we still write $\lambda \langle \str a_1, \dots, \str a_k\rangle.t[\str a_1,\dots,\str a_k]$ as short hand for  $\lambda \str b.t[\pi_1\str b, \ldots,\pi_k\str b]$.
This is all done for the sake of readability.

\subsection[Lambda-Definability]{$\lambda$-Definability}\label{ap: lambda}
The treatment of the typed $\lambda$-calculus here follows that used by Cook and Urquart for their characterization of Mehlhorn's class \cite{CU}.
The set of {\em types} is defined inductively as follows:
\begin{itemize}
\item
0 is a type
\item
$(\sigma \rightarrow \tau)$ is a type, if $\sigma$ and $\tau$ are types.
\end{itemize}

The set $Fn(\tau)$ of  \demph{functionals of type} $\tau$ is defined
by induction on $\tau$:
\begin{itemize}
\item $Fn(0) = \Sigma^*$
\item $Fn(\sigma \rightarrow \tau) = \{F | F: Fn(\sigma) \rightarrow Fn(\tau)\}$.
\end{itemize}
It is not hard to show that each type $\tau$ has a unique normal form
\[ \tau = \tau_1 \rightarrow \tau_2 \rightarrow \cdots \rightarrow \tau_k \rightarrow 0 \]
where the missing parentheses are put in with association to the right.
Hence a functional $F$ of type $\tau$ is considered in a natural
way as a function of variables $X_1, \ldots , X_k$, with $X_i$
ranging over $Fn(\tau_i)$, and returning a natural number value:
$$F(X_1)(X_2) \ldots (X_k) = F(X_1, \ldots , X_k).$$

The \demph{level} of a type is defined inductively: The
level of type 0 is 0, and the level of the type $\tau$ written in the above normal form
is 1 + the maximum of the levels of $\tau_1, \ldots, \tau_k$.
This paper is mostly only concerned with functionals of type level smaller or equal two.

Let $\mathbf{X}$ be a class of functionals.
The set of $\lambda$-\demph{terms} over $\mathbf{X}$, denoted $\lambda(\mathbf{X})$ is
defined as follows:
\begin{itemize}
\item
For each type $\sigma$ there are infinitely many
variables $X^\sigma, Y^\sigma, Z^\sigma, \ldots$ of type $\sigma$, and each
such variable is a term of type $\sigma$.
\item
For each functional $F$ (of type $\sigma$) in $\mathbf{X}$ there is a term $F^\sigma$ of type $\sigma$.
\item
If $T$ is a term of type $\tau$ and $X$ is a variable
of type $\sigma$, then $(\lambda X.T)$ is a term of type
$(\sigma \rightarrow \tau)$ (an abstraction).
\item
If $S$ is a term of type $(\sigma \rightarrow \tau)$ and $T$ is a term
of type $\sigma$, then $(ST)$ is a term of type $\tau$
(an application).
\end{itemize}

For readability, we write $S(T)$ for $(ST)$; we also write $S(T_1, \ldots , T_k)$ for $(\ldots ((ST_1)T_2) \ldots T_k)$, and $\lambda X_1 \ldots \lambda X_k.T$ for $(\lambda X_1.(\lambda X_2.( \ldots (\lambda X_k.T) \ldots )))$.

The set of free variables of a lambda term can be defined inductively and are those that are not bound by a lambda abstraction.
A term is called closed, if it has no free variables.
In a natural way each closed $\lambda$-term $T$ of type $\tau$ represents a functional in $Fn(\tau)$.
This correspondence is demonstrated in the standard way, by showing that a mapping of variables to functionals with corresponding type can be extended to a mapping of terms to functionals with corresponding type.

An \demph{assignment} is a mapping $\varphi$ taking variables to functionals with corresponding type.
Suppose $\varphi$ is an assignment and $T$ a $\lambda$-term over $\mathbf X$.
The value $\mathcal{V}_\varphi(T)$ of $T$ with respect to $\varphi$ is defined by induction on $T$ as follows.

When $T$ is a variable, $\mathcal{V}_\varphi(T)$ is $\varphi(T)$.
If $T = F^\sigma$ is a constant symbol for some $F\in \mathbf X$, then $\mathcal{V}_\varphi(T) = F$.

Suppose that $\tau = \tau_1 \rightarrow \ldots \rightarrow \tau_k \rightarrow 0$.
When $T$ has the form $\lambda X^\sigma. S^\tau$, $F$ is a type $\sigma$ functional and $F_i$ are type $\tau_i$ functionals, then
\[ \mathcal{V}_\varphi(T)(F,F_1, \ldots, F_k)
:= \mathcal{V}_{\varphi'}(S)(F_1, \ldots, F_k), \]
where $\varphi'(X^\sigma) = F$, but $\varphi'$ is otherwise identical to
$\varphi$.
When $T$ has the form $S^{\sigma \rightarrow \tau}R^\sigma$,
\[ \mathcal{V}_\varphi(T)(F_1, \ldots, F_k) =
\mathcal{V}_\varphi(S)(\mathcal{V}_\varphi(R)(F_1, \ldots, F_k)).\hspace{.2in}\Box \]

It is not hard to show that if $T,S$ are terms such that $T$ is a $\beta$
or $\eta$ redex and $S$ is its contractum, then for all $\varphi$,
$\mathcal{V}_\varphi(T) = \mathcal{V}_\varphi(S)$.
A functional $F$ is \demph{represented} by a term $T$ relative to an assignment $\varphi$ if $F = \mathcal{V}_\varphi(T)$.

Our goal in this paper is to prove the equivalence, with respect to $\lambda$-representability in the presence of poly-time type-1 functions, of type-2 functionals capturing different forms of recursion on notation.
To this end we have the following definitions. 

\begin{definition}
  Let \p be the class of (type-1) poly-time functions, and $F,G$ be functionals.
  We say that $F$ is \p-\emph{reducible} to $G$, denoted $F \redP G$ if $F$ is representable by a term of $\lambda(\p \cup \{G\})$, and that $F$ is \p-{\em equivalent} to $G$, denoted $F \equivP G$, if $F \redP G$ and $G \redP F$.
\end{definition}
We regularly use that \p-reducibility is a transitive relation, which is easily verified.
We refer to the class of functionals representable by a term from $\lambda(\p\cup\{G\})$ as the class of functionals \emph{generated by} $G$.
Clearly two functionals are \p-equivalent if and only if they generate the same classes of functionals.

\section{The Cook-Urquhart recursor and bounded iteration}\label{Cook Urquhart}

Our starting point is the recursor that Cook and Urquart use to characterize a class of higher-order polynomial-time functionals \cite{CU}.
This recursor is patterned on the scheme of limited recursion on notation introduced by Cobham \cite{Cobham} and its second-order variant, introduced by Mehlhorn \cite{Mehlhorn}. In \cite{KapThes} it is proved that the type-two functionals definable in the Cook-Urquhart system coincide with Mehlhorn's class.
The recursor is defined as follows:
\[
\Rec(\varphi, \psi, \str a, \epsilon) := \str a\quad \text{and}\quad
\Rec(\varphi,\psi,\str a, \str{c}i) := \lmin(\varphi(\str{c}i,\str t),\psi(\str{c}i)),\quad\text{where} \quad \str t = \Rec(\varphi, \psi, \str a, \str c).
\]
Here, the length minimum $\lmin$ returns its left argument if it has strictly smaller length and the right argument otherwise as defined in the preliminaries.
The schemes used by Cobham and Mehlhorn feature explicit external bounding that captures almost directly the notion of bounding by a polynomial (in the first-order setting we could easily use a scheme with explicit bounding by polynomials in the argument size, and as shown by \cite{Ign} this may be extended to the second-order setting as well).
In the Cook-Urquart recursor, this limiting is realized via an additional type-1 input $\psi$.
Our first observation is that the limiting may instead be realized through a type-0 input.

\begin{lemma}[$\Rec \equivP \Rec_0$]\label{bounding with constants}
  The Cook-Urquart recursor and its restriction to constant bounding functions generate the same class of functionals.
  More specifically $\Rec$ is \p-equivalent to the functional
  \[ \Rec_0 (\varphi, \str b, \str a, \str c) := \Rec (\varphi, \lambda \str d. \str b, \str a, \str c). \]
\end{lemma}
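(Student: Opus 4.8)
The equivalence splits into two reductions. The direction $\Rec_0 \redP \Rec$ is immediate from the definition: $\Rec_0 = \lambda \varphi\lambda \str b\lambda \str a\lambda \str c.\Rec(\varphi, \lambda \str d.\str b, \str a, \str c)$ is already a closed term of $\lambda(\{\Rec\})\subseteq\lambda(\p\cup\{\Rec\})$. So all the content lies in $\Rec \redP \Rec_0$, i.e.\ in recovering the position-dependent bound $\psi(\str c i)$ of $\Rec$ from the single constant bound that $\Rec_0$ offers.

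My plan for $\Rec \redP \Rec_0$ is to push the $\lmin$-bounding that $\Rec$ performs against $\psi$ into the step function, and then choose a constant $\str b$ so large that the bounding $\Rec_0$ still performs against $\str b$ never fires. Concretely, set $\varphi'(\str d, \str t) := \lmin(\varphi(\str d, \str t), \psi(\str d))$ and prove
\[ \Rec(\varphi, \psi, \str a, \str c) = \Rec_0(\varphi', \str b, \str a, \str c) \]
for any $\str b$ with $|\psi(\str e)| < |\str b|$ for every nonempty initial segment $\str e \subseteq \str c$. This is a routine induction on $\str c$: in the step case the outer $\lmin(\,\cdot\,,\str b)$ of $\Rec_0$ is applied to $\varphi'(\str c i, \str t) = \lmin(\varphi(\str c i,\str t),\psi(\str c i))$, whose length is at most $|\psi(\str c i)| < |\str b|$, so it acts as the identity and leaves exactly the defining equation of $\Rec$. (The strict inequality is what guarantees $\lmin$ returns its left argument rather than $\str b$; this is why I will pad $\str b$ by one symbol below.)

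The real work, and the step I expect to be the main obstacle, is producing such a $\str b$ from the inputs using only $\p\cup\{\Rec_0\}$. The naive route---iterating over the initial segments of $\str c$ and accumulating the length-maximum of the values $\psi(\str e)$---fails, because to run this accumulation through $\Rec_0$ one would already need a constant bound on the accumulator, i.e.\ on the very quantity one is trying to compute. The trick is to iterate over a \emph{short} object instead of a large one: rather than the maximal value, compute an initial segment $\str e^*$ that realizes it. I define a step function $\chi(\str d, \str t)$ returning $\str d$ when $|\psi(\str t)| < |\psi(\str d)|$ and $\str t$ otherwise, and run $\Rec_0(\chi, \str c\,\sdone, \epsilon, \str c)$. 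An easy induction shows the iterated value is always an initial segment of $\str c$, hence of length at most $|\str c| < |\str c\,\sdone|$, so the constant bound $\str c\,\sdone$ never truncates and the call returns some $\str e^* \subseteq \str c$ maximizing $|\psi(\str e^*)|$ over all initial segments.

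Finally I set $\str b := \psi(\str e^*)\,\sdone$, so that $|\str b| = |\psi(\str e^*)| + 1 > |\psi(\str e)|$ for every initial segment $\str e \subseteq \str c$; this $\str b$ meets the hypothesis of the displayed identity, giving $\Rec(\varphi, \psi, \str a, \str c) = \Rec_0(\varphi', \str b, \str a, \str c)$. Assembling the argmax-call, the definition of $\str b$, the step function $\varphi'$, and the main $\Rec_0$-call into one term of $\lambda(\p\cup\{\Rec_0\})$ with free variables $\varphi,\psi,\str a,\str c$ yields $\Rec \redP \Rec_0$, and together with the easy direction gives $\Rec \equivP \Rec_0$. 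The only delicate bookkeeping I anticipate is checking that the argmax-iteration ranges over exactly the initial segments relevant to the main recursion (including $\epsilon$ is harmless, as it only enlarges $\str b$) and confirming the outer $\lmin$ is inert; both reduce to the two short inductions sketched above once the argmax idea is in place.
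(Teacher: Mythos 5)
Your proposal is correct and follows essentially the same route as the paper's own proof: the easy direction by definition, the identity $\Rec(\varphi,\psi,\str a,\str c)=\Rec_0(\lambda\str d.\lambda\str t.\lmin(\varphi(\str d,\str t),\psi(\str d)),\str b,\str a,\str c)$ with a by-one-symbol padded constant bound, and crucially the same argmax trick of iterating over (short) initial segments of $\str c$ rather than accumulating the (potentially long) maximum of $\psi$-values, so that $\Rec_0$'s constant bound can be taken as $\str c$ padded by a symbol. The only differences are cosmetic (you pad the bound at the end rather than the front, and your argmax step function $\chi$ states the update condition $|\psi(\str t)|<|\psi(\str d)|$ directly and correctly, whereas the paper expresses it via $\lmin$).
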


\begin{proof}
  From the definition of $\Rec_0$ it is immediate that $\Rec_0 \redP \Rec$.
  To see that also $\Rec \redP \Rec_0$ argue that it suffices to show that $\max \redP \Rec_0$, where $\max$ is the functional that maximizes return values of a function over the initial segments of a string, i.e.\ is recursively defined via $\max(\psi, \epsilon):= \psi(\epsilon)$ and
  \[
  \max(\psi, \str c i) := \begin{cases} \psi (\str c i) & \text{ if } \lmin(\max(\psi, \str c),\psi(\str c i)) = \max(\psi, \str c) \\
    \max(\psi, \str c) &\text{ otherwise.} \end{cases}
  \]
  Indeed, once this is proven $\Rec \redP \Rec_0$ follows from the equality
  \[
  \Rec(\varphi, \psi, \str a, \str c) = \Rec_0(\lambda \str d.\lambda \str t.\lmin(\varphi(\str d, \str t),\psi(\str d)), \mathtt{0}\max(\psi, \str c), \str a, \str c),
  \]
  where the second argument is the maximum with an additional digit added to make sure it is always strictly bigger than any value of $\psi$ on an initial segement of $\str c$.
  This equality can be proven by an induction where the crucial point in the induction step is that the outer of the nested minima always chooses its left argument as value.
  
  To see that the length maximization functional is definable using $\Rec_0$, note that the $\mathsf{argmax}$ functional, which returns the smallest initial segment where a given input-function assumes its maximum, can be defined from $\Rec_0$ via
  \[
  \mathsf{argmax}(\psi,\str c) = \Rec_0(\lambda \str d.\lambda \str t.A(\psi,\str d, \str t), \str c, \epsilon, \str c)
  \]
  where $A(\psi,\str d,\str t) = \str d$ if $\lmin(\psi(\str t),\psi(\str d)) \neq \psi(\str t)$ and $\str t$ otherwise.
  Since $\max(\psi, \str c) = \psi(\mathsf{argmax}(\psi, \str c))$, it follows that $\max$ can be expressed and thus that $\Rec \redP \Rec_0$.
\end{proof}

As a further simplification of $\Rec$, it is possible to eliminate the reference to the current value of the recursion parameter at each step, that is, to replace a functional capturing a form primitive recursion on notation with one that captures functional iteration. 
This is known as a folklore result, but to the best of our knowledge does not appear explicitly in any previous work in this setting.
The most similar characterization we are aware of is one based on typed loop-programs and appeared in \cite{CK}.
In the case of primitive recursion, the equivalence with iteration was first explictly proved in \cite{robinson1947}.

For a function $\varphi\colon \Sigma^*\to\Sigma^*$ let the $n$-fold iteration $\varphi^n$ be inductively defined by $\varphi^0(\str a) := \str a$ and $\varphi^{n+1}(\str a) := \varphi(\varphi^n(\str a))$.
An unbounded iterator would be a functional that takes $n, \str a$ and $\varphi$ as inputs and returns $\varphi^n(\str a)$.
Recall from the introduction, that there are polynomial-time computable $\varphi$ such that the function $\lambda \str a.\lambda \str b. \varphi^{|\str b|}(\str a)$ exhibits exponential growth and is in particular not polynomial time computable.
Thus, to capture the class of feasible functionals, the considered iterator needs to be bounded.
We define the {\em bounded iterator}
$\Iter$ by
\[
\Iter(\varphi,\str b, \str a, \str c):= (\lambda \str t.\lmin(\varphi(\str t),\str b))^{|\str c|}(\lmin(\str a,\str b)).
\]
That is: $\Iter$ performs $|\str c|$ iterations of the input function $\varphi$ on starting value $\lmin(\str a,\str b)$, truncating the resulting value after each iteration to be no longer than the bound $\str b$.

Before we go on to prove the bounded iterator equivalent to the Cook-Urquart recursor, let us briefly discuss the choices we have taken in bounding.
First off, it is easy to see that whether or not the starting value is truncated is irrelevant up to \p-equivalence.
Furthermore, the definition of $\Iter$ is such that the bounding is done after $\varphi$ is applied.
Another possibility would be to consider an iterator where the bounding is done on the argument side of $\varphi$, i.e.\ before its application.
We give a short proof that the resulting iterator is equivalent.
\begin{lemma}($\Iter\equivP\Jter$)
  Output-bounded iteration generates the same class of functionals as argument-bounded iteration.
  More specifically $\Iter$ is \p-equivalent to the functional
  \[
  \Jter(\varphi,\str b, \str a, \str c) := (\lambda \str t.\varphi(\lmin(\str t, \str b)))^{|\str c|}(\str a).
  \]
\end{lemma}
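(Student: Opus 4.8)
The plan is to establish both reductions by exhibiting explicit $\lambda$-terms, exploiting the fact that the two iterators differ only in whether the truncation by $\str b$ happens before or after the application of $\varphi$. The key structural observation is that one application of the $\Jter$ step function, $\str t \mapsto \varphi(\lmin(\str t,\str b))$, followed by a truncation, produces the same effect as an $\Iter$ step applied to an already-truncated value. So the two iterations agree up to a single "off-by-one" shift in where the truncation sits relative to the applications of $\varphi$.

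For the direction $\Jter \redP \Iter$, I would introduce a function that composes $\varphi$ with truncation, i.e. set $\psi := \lambda \str t.\varphi(\lmin(\str t,\str b))$, and observe that the $\Jter$-iteration is the $|\str c|$-fold iteration of $\psi$ on $\str a$. The difficulty is that $\Iter$ always truncates the \emph{output} of its step function to length $\str b$, whereas $\Jter$ does not truncate the final output. To reconcile this, I would run $\Iter$ with a bound large enough that the internal output-truncations are harmless on the intermediate values but still realize the argument-side truncation that $\Jter$ wants. Concretely, the idea is that $\lmin(\psi(\str t),\str b')$ for a sufficiently generous bound $\str b'$ never actually cuts, so $\Iter(\psi,\str b',\str a,\str c)$ reproduces the $\Jter$-iteration; the argument-side truncation is already baked into $\psi$ itself. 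I would need to verify by induction on $|\str c|$ that the intermediate values stay within the generous bound, which is where the care lies.

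For the direction $\Iter \redP \Jter$, the symmetric idea applies: the $\Iter$ step $\str t \mapsto \lmin(\varphi(\str t),\str b)$ can be simulated by a $\Jter$ step if we feed $\Jter$ a function that first applies $\varphi$ and then truncates, again reorganizing which side the $\lmin$ sits on. Here I would take the step function $\str t \mapsto \varphi(\str t)$ but arrange the bounding through $\Jter$'s argument-side truncation; since $\Jter$ truncates its input before each application, truncating to $\str b$ on the argument side one step later coincides with truncating the output of the previous step to $\str b$, because the truncated value is then passed unchanged through the next $\lmin(\cdot,\str b)$. The only mismatch is again a single boundary application, which can be absorbed by an explicit outer truncation $\lmin(\cdot,\str b)$ applied to the whole $\Jter$ expression, together with the already-noted fact that truncating the starting value is irrelevant up to \p-equivalence.

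I expect the main obstacle to be handling the boundary cleanly: the two iterators place the truncation at opposite ends of the chain of applications, so a naive identification is off by one truncation at either the first or the last step. The crux of each reduction is an induction on $|\str c|$ establishing that, after accounting for this single shift (by a generous bound on one side and an explicit outer $\lmin$ on the other), the two iteration sequences coincide. The inductive invariant must track not just the current value but that it is already in truncated form, so that the next step's $\lmin$ acts as the identity; formulating this invariant correctly is the delicate part, after which the $\lambda$-terms witnessing each reduction and the verification of the step case are routine.
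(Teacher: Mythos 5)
Your first direction, $\Iter \redP \Jter$, is correct and is exactly the paper's: one proves $\Iter(\varphi,\str b,\str a,\str c)=\lmin(\Jter(\varphi,\str b,\str a,\str c),\str b)$, so a single outer truncation witnesses the reduction.

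The direction $\Jter \redP \Iter$ has a genuine gap. You propose to run $\Iter$ on $\psi:=\lambda\str t.\varphi(\lmin(\str t,\str b))$ with a ``sufficiently generous bound $\str b'$'' chosen so that the output truncations never cut. But no such $\str b'$ is available: $\varphi$ is an arbitrary type-one input, so the lengths $|\varphi(\str s)|$ of its answers on the (truncated, hence length-at-most-$|\str b|$) arguments arising in the iteration are not bounded by any function of $\str a,\str b,\str c$ alone, and any admissible bound would have to be computed from $\varphi$ itself. Computing such a bound --- say the maximum of $|\varphi(\str s_i)|$ along the iteration --- is again a bounded-iteration problem whose step values are exactly the quantities you lack a bound for; this circularity is precisely what the bounding in $\Iter$ is designed to forbid, and it is the same phenomenon that forces the paper, in Lemma~\ref{bounding with constants}, to compute $\mathsf{argmax}$ inside the recursion (whose values are bounded) and apply the oracle once at the end, rather than computing $\max$ directly. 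The repair here is the same idiom, and it makes the generous bound unnecessary: since every argument fed to $\varphi$ during the $\Jter$-iteration is already truncated to length at most $|\str b|$, the sequence of \emph{truncated} intermediate values is precisely the $\Iter$-iteration, whence for $\str c\neq\epsilon$ one has the exact identity $\Jter(\varphi,\str b,\str a,\str c)=\varphi(\Iter(\varphi,\str b,\str a,\str c\gg 1))$; the last application of $\varphi$ is peeled off and performed outside the iterator, where its output need not be bounded at all, and the case $\str c=\epsilon$ is a poly-time case distinction. This identity, proved jointly with the one from your first direction by simultaneous induction on $|\str c|$, is what the paper's proof rests on. (Your approach could be salvaged by obtaining $\str b'$ via an argmax-style computation with $\Iter$, but that already requires the peel-off idea, after which the detour through $\str b'$ is redundant.)
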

\begin{proof}
  We prove that for all $\varphi, \str b, \str a, \str c$,
  \begin{align*}
    \Iter(\varphi, \str b, \str a, \str c) &= \lmin(\Jter(\varphi, \str b, \str a, \str c),\str b)\tag{*}\\
    \Jter(\varphi, \str b, \str a, \str c) &=\left\{\begin{array}{ll}\str b & \text{if $\str c=\epsilon$;}\tag{**}\\
    \varphi(\Iter(\varphi, \str b, \str a, \str c')) & \text{ if } \str c = \str c' i.
    \end{array}\right.
  \end{align*}
  We prove  (*) and (**) simultaneously by induction on $|\str c|$.
  The case when $|\str c|=0$ is clear, so suppose (*) and (**) hold for all $\str c'$ with $|\str c'|=k \ge 1$.
  Consider $\str c$ with $|\str c|=k+1$, say $\str c = \str{c'}i$ where $|\str c'|=k$.
  Then
  \begin{align*}
    \Iter(\varphi,\str b,\str a, \str c)&=\lmin(\varphi(\Iter(\varphi,\str b,\str a, \str c')), \str b)\\
    &=\lmin(\Jter(\varphi, \str b, \str a, \str c),\str b)\tag{By IH (**)}
  \end{align*}
  and
  \begin{align*}
    \Jter(\varphi, \str b, \str a, \str c)&=\varphi(\lmin(\Jter(\varphi, \str b, \str a, \str c'), \str b))\\
    &=\varphi(\Iter(\varphi,\str b, \str a, \str c')).\tag{By IH (*)}
  \end{align*}
\end{proof}

We end the section with the proof that the bounded iterator, and its modification from the preceding lemma, generate the basic feasible functionals.
That is, that they generate the same class of functionals as the Cook-Urquart recursor.
\begin{lemma}[$\Rec \equivP \Iter$]\label{iterec}
  The Cook-Urquart recursor and the bounded iterator are \p-equivalent.
\end{lemma}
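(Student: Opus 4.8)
The plan is to route the argument through $\Rec_0$: Lemma~\ref{bounding with constants} gives $\Rec\equivP\Rec_0$, so by transitivity of $\redP$ it suffices to prove $\Iter\equivP\Rec_0$. Comparing the two recurrences, they differ in only two respects. First, $\Rec_0$ hands its step-function the current recursion parameter $\str ci$ in addition to the previous value $\str t$, whereas the step-function of $\Iter$ sees only $\str t$. Second, $\Iter$ begins from $\lmin(\str a,\str b)$ rather than $\str a$, which is harmless since the first genuine step applies $\lmin(\cdot,\str b)$ in any case. Unwinding $\Iter$ on $\str c=\str c'i$ yields $\Iter(\varphi,\str b,\str a,\str ci)=\lmin(\varphi(\Iter(\varphi,\str b,\str a,\str c)),\str b)$ with base value $\lmin(\str a,\str b)$. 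For the direction $\Iter\redP\Rec_0$ I would therefore just check
\[
\Iter(\varphi,\str b,\str a,\str c)=\Rec_0(\lambda\langle\str d,\str t\rangle.\varphi(\str t),\str b,\lmin(\str a,\str b),\str c)
\]
by a routine induction on $|\str c|$: the step-function discards its position argument, and both sides obey the same recurrence.

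The substantial direction is $\Rec_0\redP\Iter$, where the step-function fed to $\Iter$ must reconstruct the position that $\Rec_0$ supplies for free. The idea is to thread the relevant prefix of $\str c$ through the iteration as part of the state, iterating on pairs $\langle\str d,\str t\rangle$ from the start value $\langle\epsilon,\str a\rangle$. Since $\str c$ is a free variable of the step-term, the next prefix is recoverable by truncation: $\str c^{\leq|\str d\sdzero|}$ is the prefix of $\str c$ one symbol longer than $\str d$. I would take the step-function
\[
G:=\lambda\str s.\langle\str c^{\leq|(\pi_1\str s)\sdzero|},\lmin(\varphi(\str c^{\leq|(\pi_1\str s)\sdzero|},\pi_2\str s),\str b)\rangle
\]
and set $\Rec_0(\varphi,\str b,\str a,\str c)=\pi_2(\Iter(G,\str b',\langle\epsilon,\str a\rangle,\str c))$ for a suitable bound $\str b'$. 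The invariant, proved by induction on the iteration count $j$, is that after $j$ steps the state is $\langle\str d_j,\str t_j\rangle$, where $\str d_j$ is the length-$j$ prefix of $\str c$ and $\str t_j=\Rec_0(\varphi,\str b,\str a,\str d_j)$; the step case matches the recurrence of $\Rec_0$ exactly, since $\str d_{j+1}$ is $\str d_j$ extended by $c_{j+1}$. After $|\str c|$ iterations the state is $\langle\str c,\Rec_0(\varphi,\str b,\str a,\str c)\rangle$, and $\pi_2$ extracts the value.

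The one genuine obstacle is that $\Iter$ truncates its running value to the bound after every step, and here that value is the entire pair $\langle\str d,\str t\rangle$, so a careless bound would corrupt the threaded prefix. I would resolve this by choosing $\str b'$ large enough to dominate every reachable pair. Along the iteration $|\str d|\leq|\str c|$, and the second component is either the initial $\str a$ or a value of $\lmin(\cdot,\str b)$, so $|\str t|\leq|\str a\str b|$; hence by the monotonicity of tupling in each coordinate the choice $\str b':=\langle\str c,\str a\str b\rangle$ guarantees $|\langle\str d,\str t\rangle|\leq|\str b'|$ throughout. The outer truncation therefore never fires, $\Iter$ faithfully computes the paired recurrence, and chaining $\Rec\equivP\Rec_0\equivP\Iter$ gives the claim.
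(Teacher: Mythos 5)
Your construction follows the paper's own proof quite closely: both directions are routed through $\Rec_0$ via Lemma~\ref{bounding with constants}; the direction $\Iter\redP\Rec_0$ is the same one-line identity (one slip: the step function must be curried as $\lambda\str d.\lambda\str t.\varphi(\str t)$, since $\Rec_0$ applies it as $\varphi(\str d,\str t)$; your $\lambda\langle\str d,\str t\rangle.\varphi(\str t)$ denotes a unary function on pairs and does not type-check in that position); and for $\Rec_0\redP\Iter$ you thread the recursion position through the iteration inside a pair, exactly as the paper's functional $\Phi$ does, except that you carry the prefix $\str c^{\leq j}$ where the paper carries the counter $\mathtt{0}^{j}$ --- an immaterial difference. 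Your choice of bound does buy a real simplification: because the second coordinate $\str a\str b$ of $\str b'$ dominates the untruncated initial value $\str a$ as well as every $\lmin(\cdot,\str b)$-value, you reduce $\Rec_0$ to $\Iter$ in one step, whereas the paper goes through an auxiliary recursor $\Rec_0'$ with truncated initial value and then repairs the initial value with the case-distinction function $H$.

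However, there is a genuine (though fixable) gap at the decisive step. From $|\langle\str d,\str t\rangle|\le|\str b'|$ you infer that ``the outer truncation never fires.'' Under the paper's convention for $\lmin$ this inference is invalid: $\lmin$ returns its \emph{right} argument when the lengths are equal, so whenever the running pair ties with $\str b'$ in length, $\Iter$ replaces it by $\str b'$. Such ties do occur, e.g.\ for $\str a=\epsilon$ at the last step $j=|\str c|$ when the running value has saturated at $\str t=\str b$. Your invariant is in fact preserved, but only because in every tie case the pair can be shown to equal $\str b'$ itself: length equality forces $\str d=\str c$ and, moreover, either $\str a=\epsilon$ and $\str t=\str b$, or $\str b=\epsilon$ and $\str t=\str a$, so that $\langle\str d,\str t\rangle=\langle\str c,\str a\str b\rangle=\str b'$. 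This check is missing from your argument, and it is exactly the subtlety the paper addresses with its closing remark that the recursion value ``is either strictly shorter than $\str b$ or equal to $\str b$.'' The cleanest repair is the padding trick the paper already uses in Lemma~\ref{bounding with constants}: take $\str b':=\langle\str c,\str a\str b\sdzero\rangle$. Then every reachable second component has length at most $|\str a\str b|<|\str a\str b\sdzero|$, so by the monotonicity of tupling every reachable pair is \emph{strictly} shorter than $\str b'$, and your claim that the truncation never fires becomes literally true.
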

\begin{proof}
  The first implication, namely that $\Iter \redP \Rec$, follows from the equality
  \[ \Iter(\varphi,\str b,\str a, \str c) = \Rec(\lambda \str d.\lambda\str t.\varphi(\str t), \lambda \str d.\str b, \lmin(\str a,\str b), \str c), \]
  that can be proven through a simple induction.

  For the converse note that, by Lemma \ref{bounding with constants} the recursor is equivalent to its version $\Rec_0$ where the bounding is done via a constant instead of a function.
  Thus it suffices to prove that $\Rec_0 \redP \Iter$.
  Furthermore note how close the expanded definition of the iterator is to the definition of $\Rec_0$:
  \[
  \Iter(\varphi, \str b, \str a, \epsilon):=\lmin(\str a,\str b) \quad \text{and} \quad
  \Iter(\varphi, \str b, \str a, \str{c}i):=\lmin(\varphi(\str t),\str b),\quad
  \text{where}\quad t = \Iter(\varphi, \str b, \str a, \str c) \]
  The main difference is that for the recursor the step function $\varphi$ is additionally given access to the value of the recursion parameter $\str c$.
  We postpone the discussion of how to accomodate the additional bounding of the initial value to the end of the proof and show that the operator $\Rec_0'$ defined by
  \[
  \Rec_0'(\varphi,\str a, \str b, \epsilon) := \lmin(\str{a},\str{b})\quad \text{and}\quad
  \Rec_0'(\varphi,\str a, \str b, \str{c}i) := \lmin(\varphi(\str{c}i,\str t),\str b),\quad
  \text{where}\quad \str t = \Rec_0'(\varphi, \str a, \str b, \str c). \]
  can be expressed by using the bounded iterator.
  To achieve this define 
  \[
  \Phi(\varphi, \str b, \str c) := \lambda \langle \str u, \str v \rangle.\langle \str u\mathtt{0}, \lmin(\varphi(\str c^{\leq |\str u|+1},\str v),\str b)\rangle.
  \]
  In the above $\varphi$ has the type of a functional input of the recursor $\Rec_0'$ and $\Phi(\varphi,\str b, \str c)$ has the type of a functional input for the bounded iterator for fixed $\varphi$, $\str b$ and $\str c$.
  We claim that for any $\str c$, and $\str c' \subseteq \str c$,
  \begin{equation}\tag{*}\label{Recursor Iterator}
    \Iter(\Phi(\varphi, \str b, \str c),\langle \mathtt{0}^{|\str c|}, \str b \rangle, \langle \epsilon, \str a\rangle, \str c')
    = \langle \mathtt{0}^{|\str c'|}, \Rec_0'(\varphi, \str b, \str a, \str c')\rangle
  \end{equation}
  and so, in particular
  \[
  \Rec_0'(\varphi, \str b, \str a, \str c) = \pi_2(\Iter(\Phi(\varphi, \str b, \str c), \langle \mathtt{0}^{|\str c|},\str b\rangle, \langle \epsilon, \str a\rangle, \str c)),
  \]
  which proves the \p-reducibility of $\Rec_0'$ to $\Iter$.
  The equality \eqref{Recursor Iterator} can be verified by fixing $\str c$, an proving the following statement by induction on $\str c'$:
  if $\str c' \subseteq \str c$, then \eqref{Recursor Iterator} holds for $\str c'$.
  The base case of this induction follows from the properties we demanded the pairing functions to have.
  Next suppose that the assertion is true for $\str c'$.
  If $\str c'i \subseteq \str c$, then it is also the case that $\str c' \subseteq \str c$, and the induction hypothesis implies that \eqref{Recursor Iterator} holds for $\str c'$.
  But then
  \begin{align*}
    \Iter(\Phi(\varphi, \str b, \str c),\langle \mathtt{0}^{|\str c|}, \str b\rangle, \langle \epsilon, \str a\rangle, \str c'i)&= \lmin(\Phi(\varphi, \str b, \str c)(\langle\mathtt{0}^{|\str c'|}, \Rec_0'(\varphi, \str b, \str a, \str c')\rangle),\langle \mathtt{0}^{|\str c|}, \str b\rangle)\\
    &= \lmin(\langle \mathtt{0}^{|\str{c'}|}\mathtt{0},\lmin(\varphi(\str c^{\leq |\str c'|+1},\Rec_0'(\varphi, \str b, \str a, \str c')), \str b)\rangle,\langle \mathtt{0}^{|\str c|}, \str b\rangle)\\
    &= \lmin(\langle \mathtt{0}^{|\str{c}'i|}, \lmin(\varphi(\str{c'}i, \Rec_0'(\varphi, \str b, \str a, \str c')), \str b)\rangle, \langle \mathtt{0}^{|\str c|}, \str b\rangle)\\
    &= \lmin(\langle \mathtt{0}^{|\str{c}'i|}, \Rec_0'(\varphi, \str b, \str a, \str{c}'i))\rangle, \langle \mathtt{0}^{|\str c|}, \str b\rangle)\\
    &= \langle \mathtt{0}^{|\str{c}'i|}, \Rec_0'(\varphi, \str b, \str a, \str{c}'i))\rangle
  \end{align*}
  Where the last equality uses the properties of the tupling functions again and the fact that $\Rec_0'(\varphi, \str b, \str a, \str {c'}i)$ is either strictly shorter than $\str b$ or equal to $\str b$.

  Finally, to change the initial value, define $H$ as follows:
  \[
  H(\varphi, \str d, \str t, \str a)=\begin{cases}
  \varphi(\str d, \str t) & \text{ if $|\str t| >1$;}\\
  \varphi(\str d, \str a) & \text{ otherwise,}
  \end{cases}.
  \]
  then
  \[
  \Rec_0(\varphi, \str a, \str b, \str c) = \left\{\begin{array}{ll}
  \str a & \text{if $\str c=\epsilon$;}\\
  \Rec_0'(\lambda \str t.\lambda \str d.H(\varphi,\str d,\str t, \str a), \str a, \str b, \str c)  &\text{otherwise}
  \end{array}\right.
  \]
  and thus we obtain that $\Rec_0 \redP \Rec_0'\redP \Iter$ and with the fact $\Rec\redP \Rec_0$ from Lemma \ref{bounding with constants} also the desired \p-reducibility $\Rec \redP \Iter$.
\end{proof}

\section{Iteration with Constant Length Revision}\label{length revision}

Both the Cook-Urquart recursor $\Rec$ as well as the bounded iterator $\Iter$ require an absolute bound on the size of intermediate value encountered during a recursion.
Specifying such a bound {\em a priori} can be cumbersome and this section provides an alternative way of bounding an iteration that is inspired by the classes \spt and \mpt we considered in earlier work \cite{DBLP:conf/rta/KawamuraS17, KS}.
The elementary notion used in the definition of \spt is that of a length revision.
In an \otm computation a length revision is encountered whenever the answer to an oracle query is longer than any previous response.
This notion of a length revision can easily be translated to the realm of recursion schemes:
in a recursive definition a length revision happens when the return value of the step function is bigger than any of the values returned earlier.
In particular,  define
\[
\varphi^n_{!k}(\str a) := \underbrace{\varphi(\varphi(\dots\varphi}_{\text{$\ell$ times}}(\str a)\dots))
\]
where $\ell\le n$ is maximum such that the sequence of applications contains no more than $k$ {\em length revisions}, that is an application $\varphi(\str t)$ where $|\varphi(\str t)|$ exceeds $|\str a|$ and $|\varphi(\str t')|$ for any previous call. 
 In particular, when $k=0$, this means that no calls return a value that exceeds $|\str a|$.
For $k \ge 0$, the \emph{$k$-revision iterator $\Iter_k$} is the functional defined by
\[
\Iter_k(\varphi, \str a, \str c) := \varphi_{!k}^{|\str c|}(\str a)
\]

Superficially, this definition is similar to that of the bounded iterator $\Iter$ from the last section.
The functional $\Iter_k$ iterates a function where the iteration is bounded by $k$ just like the bounded iterator does for each fixed bounding argument $\str b$. 
The essential difference is that $k$ is a statically fixed parameter, i.e.,
$\Iter_k$ constitutes a family of iterators. Our goal is to show  that for each fixed $k$ the operator $\Iter_k$ is \p-equivalent to the bounded iterator $\Iter$ (Theorem~\ref{main} below). 

%
Without restrictions on $k$, neither of the reducibilities required to prove that  $\Iter_k\equivP \Iter$ are obvious.
However, the claim that $\Iter_k\redP \Iter$ should appear reasonable given the \otm-based characterization of the basic feasible functionals \cite{KC}.
As proven in the last section, $\Iter$ is \p-equivalent to $\Rec$ and thus it is enough to check that $\Iter_k$ is a basic feasible functional, i.e., that $\Iter_k$ is computable by an \otm whose run-time is bounded by a second-order polynomial. This may be done in a straightforward way, but it is important to note that the complexity of the bounding polynomial (in terms of the depth of calls to the function input, rather than the degree) increases with $k$. In particular, while $\Iter_k$ is \p-equivalent to $\Iter$ for every $k$, the revision parameter $k$ provides a finer delineation of expressive power.

Without an appeal to the \otm-based characterization, showing the \p-equivalence of $\Iter_k$ and $\Iter$ becomes more of a challenge, although the case for $\Iter_0$ is relatively straightforward:
\begin{lemma}[$\Iter_0\redP \Iter$]\label{induction start}
  The $0$-revision iterator is \p-reducible to the bounded iterator.
\end{lemma}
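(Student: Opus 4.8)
The plan is to realize $\Iter_0$ by running $\Iter$ on a pair-valued iteration that carries a single-digit \emph{freeze flag} alongside the current iterate. The conceptual point is that $\Iter_0$ and $\Iter$ bound an iteration in genuinely different ways: once a call $\varphi(\str t)$ returns a value exceeding $|\str a|$, the $0$-revision iterator \emph{stops permanently} and reports the previous iterate, whereas the truncating iterator $\Iter$ merely replaces an over-long value by the bound $\str b$ and keeps going. Plain clamping therefore discards exactly the value that $\Iter_0$ is supposed to return, so I need a device that records whether a revision has already occurred and, once it has, leaves the stored value untouched.

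I would introduce a step functional
\[
\Psi(\varphi,\str a) := \lambda\langle \str f,\str v\rangle.\begin{cases}\langle \str f,\str v\rangle & \text{if } \str f = \sdone,\\ \langle \sdzero,\varphi(\str v)\rangle & \text{if } \str f = \sdzero \text{ and } |\varphi(\str v)| \le |\str a|,\\ \langle \sdone,\str v\rangle & \text{otherwise,}\end{cases}
\]
which freezes the pair (flag $\sdone$) the first time $\varphi$ would overshoot $|\str a|$ and otherwise advances the iterate; it is \p-definable from $\varphi$ and $\str a$ using the polynomial-time conditional, length comparison, and tupling. Started from $\langle \sdzero,\str a\rangle$, the value component never exceeds length $|\str a|$ and the flag has length $1$, so by the length-monotonicity of tupling every reachable pair is strictly shorter than $\str b := \langle \sdone, \str a\sdzero\rangle$. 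Hence $\lmin(\cdot,\str b)$ acts as the identity throughout, and $\Iter(\Psi(\varphi,\str a),\str b,\langle \sdzero,\str a\rangle,\str c)$ performs exactly $|\str c|$ unclamped iterations of $\Psi$. The intended reduction is then
\[
\Iter_0(\varphi,\str a,\str c) = \pi_2\big(\Iter(\Psi(\varphi,\str a), \langle \sdone,\str a\sdzero\rangle, \langle \sdzero,\str a\rangle,\str c)\big).
\]

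To justify this equality I would fix $\varphi,\str a$, set $\str s_0 := \str a$ and $\str s_{i+1} := \varphi(\str s_i)$, let $m$ be the least index with $|\str s_m| > |\str a|$ (and $m = \infty$ if none), and prove by induction on $j = |\str c|$ that $j$-fold application of $\Psi$ to $\langle \sdzero,\str a\rangle$ yields $\langle \sdzero,\str s_j\rangle$ when $j < m$ and $\langle \sdone,\str s_{m-1}\rangle$ when $j \ge m$. Taking second projections gives $\str s_{\min(j,m-1)}$, which is precisely the value $\varphi^{|\str c|}_{!0}(\str a)$ cut off at the first length revision. The base case and the two-case induction step are routine; the only points needing care are checking that the boundary case $|\varphi(\str v)| = |\str a|$ counts as ``no revision'' (so that the $\le$ in $\Psi$ matches the strict ``exceeds'' in the definition of a length revision), and the bound computation that keeps $\lmin$ inert. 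The main obstacle is thus not the induction but getting the freezing semantics exactly right: the flag must latch permanently, since once $\Iter_0$ encounters a revision it commits to the previous iterate even if later applications of $\varphi$ would again return short values.
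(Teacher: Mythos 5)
Your proof is correct and takes essentially the same approach as the paper: there, too, the key device is a latching freeze flag---encoded as a trailing bit ($\str t\sdzero$ for running, $\str t\sdone$ for frozen) rather than as a tuple component---whose step function advances while $|\varphi(\str t)|\le|\str a|$, freezes permanently on the first overshoot, and whose final value is recovered by stripping the flag, with the bound chosen so that the $\lmin$ clamping inside $\Iter$ never fires. If anything, your pairing-based bound $\langle\sdone,\str a\sdzero\rangle$ and the explicit strict-inequality argument via the length-monotonicity of tupling handle the inertness of the clamping more carefully than the paper's literal choice of arguments does.
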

\begin{proof}
  The main hurdle is to account for the difference in how the violation of the bound is realized: $\Iter_0$ defaults to the previous value in the iteration while $\Iter$ defaults to the value it is given as bound.
  Set
  \[
  G(\varphi,\str t\mathtt{1},\str b) := \str t\mathtt{1}\quad\quad\quad\quad
  G(\varphi,\str t\mathtt{0},\str b) :=\left\{\begin{array}{ll}
  \varphi(\str t)\mathtt{0} & \text{if $|\varphi(\str t)| \le |\str b|$;}\\
  \str{t}\mathtt{1} &\text{otherwise.}
  \end{array}\right.
  \]
  Then $\Iter_0(\varphi, \str a, \str c)$ can be obtained from $\Iter(\lambda \str t.G(\varphi,\str t, \str a), \str a, \str a\mathtt{0}, \str c)$ by simply dropping the last bit.
  Since the definition of $G$ only uses type-1 polynomial time operations and application, the \p-reducibility follows.
\end{proof}
Note that for unrestricted iteration it holds that $\varphi^n(\varphi^{n'}(\str a)) = \varphi^{n + n'}(\str a)$.
The following observation points out a similar additivity property for $\varphi^n_{!k}$ and is the starting point for recursively constructing \p-reductions of $\Iter_k$ to $\Iter$:
\begin{lemma}
\label{unwind}
For given $\varphi$, $\str a$ and numbers $k$ and $n$ set
\[ \ell := \min\{i\mid \forall j, i \leq j \leq n \Rightarrow \varphi^i_{!k}(\str a)=\varphi^j_{!k}(\str a)\}, \]
then $\ell \leq n$ and it holds that
\[
\varphi^n_{!(k+1)}(\str a)=\varphi^{n-\ell}_{!1}(\varphi^{\ell}_{!k}(\str a)).\tag{$*$}
\]
\end{lemma}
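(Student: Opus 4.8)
The plan is to reduce everything to ordinary iteration together with an analysis of where the length revisions fall. Write $\str s_j := \varphi^j(\str a)$ for the unbounded iterates, let $M_j$ denote the largest of $|\str s_0|,\dots,|\str s_j|$, and recall that a length revision occurs at step $j\ge 1$ precisely when $|\str s_j|>M_{j-1}$; let $t_1<t_2<\cdots$ enumerate the revision steps, with the convention $t_{k+1}=\infty$ when fewer than $k+1$ revisions occur among the first $n$ steps. A direct induction then shows $\varphi^j_{!k}(\str a)=\str s_{\min(j,\,t_{k+1}-1)}$ for all $j\le n$. Consequently $j\mapsto\varphi^j_{!k}(\str a)$ is eventually constant on $[0,n]$, the index $j=n$ lies in the set defining $\ell$, so $\ell\le n$, and $\str v:=\varphi^\ell_{!k}(\str a)=\varphi^n_{!k}(\str a)=\str s_{\min(n,\,t_{k+1}-1)}$.

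The argument then splits according to whether the $(k+1)$-st revision falls within the first $n$ steps. If $t_{k+1}>n$, then both $\varphi^n_{!k}(\str a)$ and $\varphi^n_{!(k+1)}(\str a)$ equal $\str s_n$, and $\str v=\str s_\ell$ is either reached exactly at step $n$ (when $\ell=n$) or is a fixed point of $\varphi$ (when $\ell<n$, since then $\str s_{\ell+1}=\str s_\ell=\str v$ gives $\varphi(\str v)=\str v$); in either case the restarted iteration performs no revision and returns $\str v=\str s_n$, matching the left-hand side. If instead $t_{k+1}\le n$, I would first pin down $\ell=t_{k+1}-1$: the value stabilizes to $\str v=\str s_{t_{k+1}-1}$ from that step on, and it cannot stabilize earlier because $\str s_{t_{k+1}-2}=\str s_{t_{k+1}-1}$ would force $\str s_{t_{k+1}}=\varphi(\str s_{t_{k+1}-1})=\str s_{t_{k+1}-1}$, contradicting that step $t_{k+1}$ strictly raises the running maximum. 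Since $\varphi^m(\str v)=\str s_{\ell+m}$, the iteration restarted at $\str v$ is the original one shifted by $\ell$, and its very first step yields $\str s_{t_{k+1}}$, which by definition exceeds $M_{t_{k+1}-1}\ge|\str v|$ and so is a revision.

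The main obstacle is the revision bookkeeping after this restart: the restarted iteration compares each new length only against values seen since $\str v$, whereas the original compares against the entire history from step $0$, so a priori the restart could register spurious revisions. The observation that resolves this is that $\str s_{t_{k+1}}$ is a new global maximum; hence once it enters the running maximum of the restarted iteration (from its step $2$ onward) that running maximum coincides with the global $M_{\ell+m-1}$, and therefore a revision at step $m\ge 2$ of the restart occurs exactly when one occurs at step $\ell+m$ of the original iteration. Thus the only revision of the restart besides the initial one is its second, sitting at step $t_{k+2}-\ell$, so that $\varphi^{n-\ell}_{!1}(\str v)=\str s_{\ell+\min(n-\ell,\,t_{k+2}-\ell-1)}=\str s_{\min(n,\,t_{k+2}-1)}=\varphi^n_{!(k+1)}(\str a)$. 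Combining the two cases yields the identity $(*)$. I expect the running-maximum alignment to be the only genuinely delicate point; the remaining steps are routine once the reformulation $\varphi^j_{!k}(\str a)=\str s_{\min(j,\,t_{k+1}-1)}$ is in place.
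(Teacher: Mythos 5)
Your proof is correct, and at the top level it follows the same plan as the paper's: $\ell\le n$ because $i=n$ trivially satisfies the defining condition, then a case split according to whether the stabilization of $j\mapsto\varphi^j_{!k}(\str a)$ on $[\ell,n]$ is genuine (a fixed point is reached, or $\ell=n$; your case $t_{k+1}>n$) or is forced by the cut-off at the $(k+1)$-st length revision (your case $t_{k+1}\le n$, where you correctly pin down $\ell=t_{k+1}-1$). The two arguments part ways in the cut-off case, and there your version is not a stylistic variant: it repairs a genuine defect in the paper's own argument. The paper concludes that case by asserting $\varphi^{m}_{!1}(\varphi^\ell_{!k}(\str a)) = \varphi(\varphi^\ell_{!k}(\str a)) = \varphi^{\ell+m}_{!(k+1)}(\str a)$ for every $m$, i.e.\ that after the $(k+1)$-st revision both iterations freeze at $\varphi(\str v)$. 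Under the paper's semantics of $\varphi^n_{!k}$ (stop just \emph{before} the budget of $k$ revisions would be exceeded), this intermediate claim is false in general: both iterations keep running after the revision until the \emph{next} revision would occur. Concretely, take $k=0$, $\str a=\mathtt{0}$, and $\varphi$ with orbit $\mathtt{0}\mapsto\mathtt{1}\mapsto\mathtt{000}\mapsto\mathtt{11}\mapsto\epsilon\mapsto\epsilon$; then (in your notation) $\ell=1$, $\str v=\mathtt{1}$, $\varphi(\str v)=\mathtt{000}$, yet $\varphi^{3}_{!1}(\str v)=\epsilon\neq\mathtt{000}$. Only the outer equality $\varphi^{m}_{!1}(\varphi^\ell_{!k}(\str a))=\varphi^{\ell+m}_{!(k+1)}(\str a)$ is true, and proving it requires exactly your running-maximum alignment: since $\str s_{t_{k+1}}$ is a new global maximum, from the second step of the restart onward the restart's running maximum coincides with the global one, so the restarted and the original iteration register the same revisions from then on and both stop at $\str s_{\min(n,\,t_{k+2}-1)}$. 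The same observation is what actually proves Corollary~\ref{unwind2}, by restarting the $0$-revision iteration from $\varphi(\varphi^{\ell}_{!k}(\str a))$. In short: your proposal is correct, uses the same decomposition as the paper, but supplies the missing (and necessary) argument at the one point where the published proof asserts something false.
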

\begin{proof}
  Since $n$ always fulfills the condition in the minimization, it follows that $\ell \leq n$.
  To prove $(*)$ we first note that the minimization condition may be satisfied in two different ways.
  It may be the case that $\varphi$ gives the same return value on all of the strings $\varphi_{!k}^\ell(\str a),\dots, \varphi_{!k}^{n-1}(\str a)$.
  In this case there will be no further length revisions, and so $\varphi^{n-\ell}_{!1}(\varphi^{\ell}_{!k}(\str a))=\varphi^{n-\ell}(\varphi^{\ell}_{!k}(\str a))=\varphi^n_{!(k+1)}(\str a)$.
  Thus suppose that it is not the case that $\varphi$ is constant on these strings and let $j$ be such that $\ell \leq j \leq n-1$ and $\varphi(\varphi^j_{!k}(\str a)) \neq \varphi^{\ell+1}_{!k}(\str a)$.
  By definition of $\ell$, the strings $\varphi_{!k}^\ell(\str a),\ldots,\varphi_{!k}^n(\str a)$ are still all equal.
  Then $\varphi(\varphi^j_{!k}(\str a)) = \varphi(\varphi^l_{!k}(\str a))$ and $\varphi(\varphi^l_{!k}(\str a))$ and $\varphi^{l+1}_{!k}(\str a)$ can only be different if the $(\ell+1)$-st call to $\varphi$ triggers the $(k+1)$-st length revision.
  Thus, in this case $\varphi_{!1}^{m}(\varphi^\ell_k(\str a)) = \varphi(\varphi^\ell_{!k}(\str a)) = \varphi^{\ell+m}_{!(k+1)}(\str a)$ for any $m$ and in particular $(*)$ must hold.
\end{proof}

\noindent
In fact, the above proof proves the following slightly stronger statement.

\medskip

\begin{cor}
  \label{unwind2}
  The equality in the last Lemma may be replaced by $\varphi^n_{!(k+1)}(\str a)=\varphi^{n-\ell-1}_{!0}(\varphi(\varphi^{\ell}_{!k}(\str a)))$.
\end{cor}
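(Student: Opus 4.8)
Write $\str s := \varphi^\ell_{!k}(\str a)$. By the definition of $\ell$ we have $\varphi^j_{!k}(\str a)=\str s$ for every $j$ with $\ell\le j\le n$, and Lemma~\ref{unwind} already supplies $\varphi^n_{!(k+1)}(\str a)=\varphi^{n-\ell}_{!1}(\str s)$. Since the corollary only rewrites the right-hand side, my plan is to reduce everything to the single identity
\[
\varphi^{n-\ell}_{!1}(\str s)=\varphi^{n-\ell-1}_{!0}(\varphi(\str s)),
\]
working under the proviso $\ell\le n-1$ so that the exponent $n-\ell-1$ is non-negative; the boundary case $\ell=n$ is degenerate, since there Lemma~\ref{unwind} already gives $\varphi^n_{!(k+1)}(\str a)=\str s$ with no application of $\varphi$ left to peel off, and I would simply note that the corollary is meant for the regime $\ell<n$.

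To prove the displayed identity I would re-use the case distinction already present in the proof of Lemma~\ref{unwind}, which I would first make explicit as a clean dichotomy: a frozen value $\str s=\varphi^\ell_{!k}(\str a)$ stabilises the $!k$-iteration for exactly one of two reasons. Either $\varphi(\str s)=\str s$, so that $\str s$ is a genuine fixed point of $\varphi$; then $\varphi^{n-\ell}_{!1}(\str s)$ and $\varphi^{n-\ell-1}_{!0}(\varphi(\str s))=\varphi^{n-\ell-1}_{!0}(\str s)$ both collapse to $\str s$, as iterating a fixed point triggers no revision at all. Or $\varphi(\str s)\neq\str s$, in which case the $!k$-iteration cannot have frozen by an ordinary advance, so the step from $\str s$ must be precisely the $(k+1)$-st length revision; then the first application $\varphi(\str s)$ is exactly the single revision that $!1$ permits, the budget is thereby spent, and the remaining $n-\ell-1$ applications proceed with no further revision, that is, as $\varphi^{n-\ell-1}_{!0}(\varphi(\str s))$.

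The step I expect to require the most care is the bookkeeping of the revision \emph{baseline}. In $\varphi^{n-\ell}_{!1}(\str s)$ a revision is counted against $|\str s|$, whereas in $\varphi^{n-\ell-1}_{!0}(\varphi(\str s))$ it is counted against $|\varphi(\str s)|$, and these thresholds differ. To close the argument I would verify that they coincide once the first revision has occurred: immediately after the application $\varphi(\str s)$ the running maximum of all values produced so far equals $|\varphi(\str s)|$ — this is exactly what it means for that application to be a length revision — so ``a further revision relative to $|\str s|$ with one unit of budget already consumed'' and ``a revision relative to $|\varphi(\str s)|$ with no budget'' impose the identical condition on every subsequent application. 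With this identification the two iterations perform the same applications of $\varphi$ and halt at the same index, which yields the displayed identity and hence the corollary.
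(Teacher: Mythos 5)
Your proof is correct, and it is in fact more careful than the paper's own treatment, which offers no separate argument for the corollary but merely remarks that the proof of Lemma~\ref{unwind} ``proves the following slightly stronger statement.'' Structurally you share that proof's dichotomy (stabilization because $\str s=\varphi^\ell_{!k}(\str a)$ is a fixed point of $\varphi$, versus stabilization because the next call would be the $(k+1)$-st length revision), but you deploy it differently: you take Lemma~\ref{unwind} as a black box and prove the bridging identity $\varphi^{n-\ell}_{!1}(\str s)=\varphi^{n-\ell-1}_{!0}(\varphi(\str s))$. This buys genuine rigor, because the step of the paper's proof that the corollary silently relies on is false as stated: in the revision case the paper claims $\varphi^{m}_{!1}(\str s)=\varphi(\str s)=\varphi^{\ell+m}_{!(k+1)}(\str a)$ for every $m$, i.e.\ that the iteration halts immediately after the last permitted revision is spent. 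It need not: with $\str a=\mathtt{0}$, $\varphi(\mathtt{0})=\mathtt{00}$, $\varphi(\mathtt{00})=\mathtt{0}$, $k=0$, $n=2$ one has $\ell=0$ and $\varphi^{2}_{!1}(\mathtt{0})=\mathtt{0}\neq\mathtt{00}=\varphi(\mathtt{0})$, since the second call returns a shorter string and is therefore permitted. What is true is exactly your formulation: after the revision the $!1$-iteration continues as a $!0$-iteration started at $\varphi(\str s)$, and your third paragraph supplies precisely the fact that makes this legitimate --- after the revision the running maximum in both iterations equals $|\varphi(\str s)|$, so the two continuations allow the same calls and freeze at the same index. (Both the Lemma and the Corollary remain true; only the paper's justification is broken, and your argument is the correct repair.) Your restriction to $\ell<n$ is also a genuine catch: for $\ell=n$ the corollary's right-hand side has a negative exponent and is ill-formed --- e.g.\ $\varphi(\str t)=\str t\gg 1$, $\str a=\mathtt{00}$, $k=0$, $n=1$ gives $\ell=1=n$ --- whereas Lemma~\ref{unwind} itself still makes sense there, so reading the corollary as addressing the regime $\ell<n$ is the right call.
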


\medskip

\noindent
This allows us to establish the following.
\begin{lemma}[$\Iter_k \redP \Iter$] For $k \ge 1$, the $k$-revision iterator is \p-reducible to the bounded iterator.
\label{iteriter}
\end{lemma}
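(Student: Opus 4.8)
The plan is to induct on $k$, using Lemma~\ref{induction start} as the base case ($\Iter_0 \redP \Iter$) and Corollary~\ref{unwind2} as the engine of the inductive step. So I assume $\Iter_k \redP \Iter$ and aim to show $\Iter_{k+1} \redP \Iter$. Writing $n := |\str c|$ and $\str s := \Iter_k(\varphi, \str a, \str c)$, Corollary~\ref{unwind2} gives
\[ \Iter_{k+1}(\varphi, \str a, \str c) = \varphi^n_{!(k+1)}(\str a) = \varphi^{n-\ell-1}_{!0}\bigl(\varphi(\varphi^{\ell}_{!k}(\str a))\bigr), \]
where $\ell$ is the stabilization index from Lemma~\ref{unwind}. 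This suggests assembling the reduction from three parts: first compute the stable level-$k$ value $\str s = \varphi^{\ell}_{!k}(\str a) = \varphi^n_{!k}(\str a)$ (available by the induction hypothesis); then perform one more application to obtain $\str s' := \varphi(\str s)$; and finally run $n-\ell-1$ further iterations under the $0$-revision regime from $\str s'$, which by the base case is $\Iter_0(\varphi, \str s', \str c''')$ for any $\str c'''$ with $|\str c'''| = n-\ell-1$. The structural payoff of this decomposition is that it removes the need to pre-compute an explicit length bound for $\Iter$: the level-$k$ part hides its bound inside the induction hypothesis, while the level-$0$ tail is self-bounding, since $0$-revision iteration never exceeds $|\str s'|$, exactly as exploited in Lemma~\ref{induction start}.

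The remaining work, and the main obstacle, is to produce the iteration count $n-\ell-1$ as a \p-term, equivalently to locate $\ell$. I would first warn against the naive guess that $\ell$ is the length of the shortest prefix $\str c'$ of $\str c$ with $\Iter_k(\varphi, \str a, \str c') = \str s$: because $\varphi$ may cycle, the level-$k$ value can reach $\str s$, leave it, and return, whereas Lemma~\ref{unwind} defines $\ell$ through constancy on the entire tail. The correct characterization is
\[ \ell = 1 + \max\{\, i \le n \mid \Iter_k(\varphi, \str a, \str c^{\leq i}) \neq \str s \,\}, \]
with $\ell := 0$ when the set is empty. This last position is obtained by a single argmax-style sweep over the initial segments of $\str c$, in the spirit of the $\mathsf{argmax}$ idiom from Lemma~\ref{bounding with constants}, evaluating $\Iter_k$ on each prefix via the induction hypothesis (efficiency is immaterial here, since we only need representability). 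Having located this position, I would use the poly-time length arithmetic available among our type-1 constants to form a string $\str c'''$ of length $n \cut (\ell+1)$, which serves as the iteration-count argument to $\Iter_0$.

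Finally I would tie the pieces together as
\[ \Iter_{k+1}(\varphi, \str a, \str c) = \Iter_0\bigl(\varphi, \varphi(\Iter_k(\varphi, \str a, \str c)), \str c'''\bigr), \]
treating separately the degenerate case $\ell = n$, in which no $(k+1)$-st revision occurs within $n$ steps, $n-\ell-1 < 0$, and the right answer is simply $\str s$; a definition-by-cases absorbs this. Since $\Iter_k \redP \Iter$ by the induction hypothesis, $\Iter_0 \redP \Iter$ by Lemma~\ref{induction start}, the sweep computing $\ell$ is built from these together with type-1 poly-time operations, and \p-reducibility is transitive, the composite term witnesses $\Iter_{k+1} \redP \Iter$. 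This completes the induction and hence establishes $\Iter_k \redP \Iter$ for all $k \ge 1$.
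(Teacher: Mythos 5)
Your proof is correct, and at the top level it is the same induction as the paper's: Lemma~\ref{induction start} as the base case, and Lemma~\ref{unwind} together with Corollary~\ref{unwind2} to decompose $\varphi^n_{!(k+1)}(\str a)$ into a level-$k$ part, one application of $\varphi$, and a level-$0$ tail, so that the whole burden falls on producing the stabilization index $\ell$ by a term of $\lambda(\p\cup\{\Iter\})$. Where you genuinely differ is in how $\ell$ is produced. The paper implements the defining formula for $\ell$ literally: it introduces a bounded universal quantifier $U$ (shown $\redP\Rec$, hence reducible to $\Iter$ via Lemma~\ref{iterec}) to decide the tail-constancy condition $\forall j,\ i\le j\le n \Rightarrow \varphi^i_{!k}(\str a)=\varphi^j_{!k}(\str a)$, and then a bounded-search functional $M$ (shown $\redP\Iter$) to minimize over $i$, i.e.\ a search nested over a quantification, with the inner comparisons supplied by the induction hypothesis. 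You instead exploit that the terminal value $\str s=\Iter_k(\varphi,\str a,\str c)$ is itself available from the induction hypothesis, which collapses the universally quantified condition to a comparison against $\str s$ and yields $\ell=1+\max\{i\le n\mid \Iter_k(\varphi,\str a,\str c^{\le i})\neq\str s\}$ (with $\ell=0$ when the set is empty); a single $\mathsf{argmax}$-style sweep in the spirit of Lemma~\ref{bounding with constants} then suffices, so you trade the paper's nested quantification-plus-search for one pass. Your caution about the naive ``first index where the value equals $\str s$'' is justified, since the level-$k$ values can indeed cycle away from $\str s$ and back. You also make explicit the degenerate case $\ell=n$, where the exponent $n-\ell-1$ in Corollary~\ref{unwind2} is formally negative and the correct output is simply $\str s$ --- a boundary case the paper's write-up leaves implicit --- and your remark that the $\Iter_0$ tail needs no external bound (zero-revision iteration is self-bounding) is exactly the feature of Lemma~\ref{induction start} that makes the assembly work. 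Both routes rest on the same pillars (transitivity of $\redP$, Lemma~\ref{iterec}, and the induction hypothesis for representability of the comparison predicate), so the difference is one of economy rather than substance, with your computation of $\ell$ being somewhat more streamlined than the paper's.
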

\begin{proof}
  We proceed by induction on $k$.
  The case of $k=0$ has been taken care of in Lemma \ref{induction start}.
  Suppose that the Lemma holds for $k$.
  We must now define $\Iter_{k+1}$ using $\Iter$.
  By Lemma~\ref{unwind} it is sufficient to show that there exists a function that on inputs $\varphi$, $\str a$, $k$ and $n$ returns the value $\ell$ and is \p-reducible to $\Iter$.
  First note that the condition $\varphi^i_k(\str a)=\varphi^j_k(\str a)$ can be checked by a function from $\lambda(\p \cup \{\Iter_k\}) \subseteq \lambda(\p \cup \{\Iter\})$, where the inclusion follows by the induction hypothesis.
  Now all that remains is to use $\Iter$ to characterize the bounded quantification and search used to define $\ell$ in Corollary~\ref{unwind2}.
  Define the following functionals:
  \begin{align*}
    U(\psi, \str a, \str c)&:=\left\{\begin{array}{ll}
    \epsilon & \text{if $\forall_{i \le |\str c|}\left(\psi(|\mathtt{0}^i,\str a)|=0\right)$;}\\
    \mathtt{0} & \text{otherwise.}
    \end{array}\right.\\
    M(\psi, \str a, \str c)&:= \mathtt{0}^j\ \text{where $j=\mu_{i \le |\str c|}\left(|\psi(\mathtt{1}^i,\str a)| > 0\right)$ if such $i$ exists, and $i+1$ otherwise.}
  \end{align*}
  We first show that $U\redP \Rec$ and appeal to Lemma~\ref{iterec} to see that it is \p-reducible to $\Iter$. 
  Define 
  \[
  V(\psi,  \str a, \str t, \str d):=\left\{\begin{array}{ll}
  \epsilon & \text{if $\str t = \epsilon$ and $|\psi(\mathtt{0}^{|\str d|},\str a)|=0$;}\\
  \mathtt{0} & \text{otherwise.}
  \end{array}\right.
  \]
  Then $U(\psi, \str a, \str c)=\Rec(\lambda \str t.\lambda \str d.V(\psi, \str a, \str t, \str d), \lambda \str d.\mathtt{0}, \epsilon, \str c)$.
  Since the definition of $V$ only uses polynomial-time computable type-1 functions and application we conclude $U \redP \Rec$.
  To show that $M \redP \Iter$, first define
  \[
  N(\psi, \str a, \str t\mathtt{0}):=\str t\mathtt{0}\quad\quad\quad\quad
  N(\psi, \str a, \str t\mathtt{1}):=\left\{\begin{array}{ll}
  \mathtt{0}\str t & \text{if $|\psi(\mathtt{0}^{|\str t|\cut 1},\str a)| \le 0$;}\\
  \str t \gg 1 &\text{otherwise,}
  \end{array}\right.
  \]
  and define
  \[
  A(\psi, \str a):=\left\{\begin{array}{ll}
  \mathtt{0} & \text{if $|\psi(\epsilon,\str a)| > 0$;}\\
  \mathtt{01} & \text{otherwise.}
  \end{array}\right.
  \]
  Then $M(\psi, \str a, \str c) = \str m =\Iter(\lambda \str t.N(\psi, \str a, \str t), A(\psi, \str a), \str{c}\mathtt{0}, \str{c})\gg 1$. In particular, if $\str m$ ends in $\mathtt{0}$, then $\str m=\mathtt{0}^{j+1}$ and if it ends in $\mathtt{1}$ then $\str m=\mathtt{0}^{|\str c|+1}\mathtt{1}$.
\end{proof}

\section{Iteration with Constant Lookahead Revision}\label{lookahead revision}
Moving to lookahead revision, the definition is similar. 
Consider the following variant of function iteration
\[
\varphi^n_{?k}(\str a) := \underbrace{\varphi(\varphi(\dots\varphi}_{\text{$\ell$ times}}(\str a)\dots))
\]
where $\ell\le n$ is maximum such that the sequence of applications contains no more than $k$ {\em lookahead revisions}, that is an application $\varphi(\str t)$ where $|\str t|$ exceeds $|\str t'|$ for any previous call $|\varphi(\str t')|$. Note that we have not included the initial call $\varphi(\str a)$ as a lookahead revision (choosing to do so would not change any results below.) Then, for $k \ge 0$, the \emph{$k$-lookahead revision iterator $\Jter_k$} is the functional such that
\[
\Jter_k(\varphi, \str a, \str c) = \varphi_{?k}^{|\str c|}(\str a)
\]
\OMIT{
We now consider the case of lookahead revisions.
Fix some function $\varphi\colon\Sigma^* \to \Sigma^*$.
For $n>0$, we say that a \emph{lookahead revision} is encountered at the $n$-th iteration of $\varphi$ on $\str a$ if $|\varphi^{n-1}(\str a)|$ is strictly bigger than any $|\varphi^i(\str a)|$ with $0\le i < n-1$, that is, the size of the argument to $\varphi$ in the $n$th iteration exceeds that in any previous iteration.
For $k \ge 1$, the \emph{$k$-lookahead revision iterator $\Jter_k$} is the functional such that
\[
\Jter_k(\varphi, \str a, \str c) = \varphi^\ell(\str a)
\]
where $\ell\le |\str c|$ is maximal such that no more than $k$ lookahead revisions happen during the first $\ell$ iterations of $\varphi$ on $\str a$\footnote{Note that this definition does not allow the possibility of {\em zero} lookahead revisions. However, such iterations are trivial in that they allow no calls at all to the step function. We could define $\Jter_{0}(\varphi,a,c)=\str a$ for all $\varphi, \str c$, but such a functional is not needed for the results of this paper.}.}
We now consider the relative power of $\Jter_k$.
\begin{lemma} For any $k \ge 0$,
$\Jter_k \redP \Iter_k$.
\end{lemma}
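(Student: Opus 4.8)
The plan is to exploit a tight combinatorial correspondence between the two notions of revision. Fix $\varphi$ and $\str a$, write $\str t_i := \varphi^i(\str a)$ for the unrestricted trajectory, and let $m_i$ denote the maximum of $|\str t_0|,\dots,|\str t_i|$. Call step $i$ a \emph{record} if $m_i > m_{i-1}$, that is, if $\str t_i$ is strictly longer than $\str a$ and than every earlier iterate. Unwinding the definitions, a length revision happens at step $i$ exactly when step $i$ is a record, whereas a lookahead revision happens at step $i$ exactly when step $i-1$ is a record (the exclusion of the initial call matching the absence of a step $0$). Thus the lookahead revisions are precisely the length revisions shifted forward by one step. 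Writing $i_{k+1}$ for the index of the $(k+1)$-st record (and $i_{k+1}=\infty$ if there are at most $k$ records), I would prove by induction on $|\str c|$ the closed forms
\[ \Iter_k(\varphi,\str a,\str c) = \str t_{\min(|\str c|,\,i_{k+1}-1)} \qquad\text{and}\qquad \Jter_k(\varphi,\str a,\str c) = \str t_{\min(|\str c|,\,i_{k+1})}. \]

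Consequently $\Jter_k$ agrees with $\Iter_k$ except that, when the $(k+1)$-st record falls within the budget (i.e.\ $i_{k+1}\le|\str c|$), it performs exactly one further application, so that $\Jter_k(\varphi,\str a,\str c)=\varphi(\Iter_k(\varphi,\str a,\str c))$, and otherwise the two coincide. It therefore remains to decide, within $\lambda(\p\cup\{\Iter_k\})$, whether $i_{k+1}\le|\str c|$, and then to branch using the poly-time conditional. This decision is the main obstacle, and the delicate point is the boundary $i_{k+1}=|\str c|+1$: here a $(k+1)$-st record exists but lies just past the budget, so the extra application must be \emph{suppressed}. The difficulty is intrinsic, since any faithful replay of $\varphi$ can be carried by $\Iter_k$ only up to its $(k+1)$-st length revision, i.e.\ never quite far enough to witness the very record whose position we must test.

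I would resolve this in two stages, each a call to $\Iter_k$. First, run $\Iter_k$ on the max-tracking step function $\chi(\langle\str v,\str w\rangle):=\langle\varphi(\str v),\str w'\rangle$, where $\str w'$ is the longer of $\str w$ and $\varphi(\str v)$, started at $\langle\str a,\str a\rangle$. Its second coordinate records the running maximum, and by monotonicity of tupling both coordinates grow strictly exactly at records; hence the length revisions of $\chi$ coincide with those of $\varphi$. This call therefore stops at the same step $\ell := \min(|\str c|,\,i_{k+1}-1)$ and returns in its first coordinate the value $\str r := \Iter_k(\varphi,\str a,\str c)=\str t_{\ell}$, while its second coordinate is a string of length $M := m_{\ell}$. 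Tracking $M$ separately is essential, as $M$ is in general strictly larger than $|\str r|$.

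Second, I would decide whether some iterate within the budget exceeds length $M$ — equivalently whether $i_{k+1}\le|\str c|$ — by a further call to $\Iter_k$ on a \emph{detection} step function $D_M$ that replays $\varphi$ but passes to a distinguished sentinel state as soon as it produces an output longer than $M$. The crucial trick is to encode every state of $D_M$ as a string of one fixed length determined by $M$: a value $\str v$ with $|\str v|\le M$ is padded reversibly up to that length, and the sentinel is an equally long, syntactically distinct string. Since every iterate strictly before step $i_{k+1}$ has length at most $M$, all reachable states are encodable and \emph{all have equal length}, so this iteration incurs no length revisions whatsoever; hence $\Iter_k$ (for the same, or indeed any, $k\ge0$) runs it for the full $|\str c|$ steps, and its final state is the sentinel precisely when $i_{k+1}\le|\str c|$. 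Feeding this test into the branch above expresses $\Jter_k$ as a term of $\lambda(\p\cup\{\Iter_k\})$, uniformly in $k$, which establishes $\Jter_k\redP\Iter_k$. I expect the remaining work to be routine bookkeeping: the inductive verification of the record/revision correspondence and the two closed forms, and the check that the constant-length encoding genuinely suppresses every length revision in the detection run so that the fixed budget $k$ never interferes.
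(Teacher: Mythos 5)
Your proof is correct, and it rests on the same combinatorial observation as the paper's --- that lookahead revisions are exactly length revisions shifted forward by one step --- but the reduction you build on top of it is genuinely different and considerably heavier. The paper exploits the shift directly by shortening the budget instead of testing it: from your own closed forms, for $\str c \neq \epsilon$,
\[
\varphi\bigl(\Iter_k(\varphi,\str a,\str c\gg1)\bigr) \;=\; \varphi\bigl(\str t_{\min(|\str c|-1,\,i_{k+1}-1)}\bigr) \;=\; \str t_{\min(|\str c|,\,i_{k+1})} \;=\; \Jter_k(\varphi,\str a,\str c),
\]
so a single call to $\Iter_k$ with iteration parameter $\str c\gg1$ followed by one \emph{unconditional} application of $\varphi$ finishes the proof (the case $\str c=\epsilon$ is a trivial poly-time branch, which the paper in fact glosses over). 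By running $\Iter_k$ with the full budget $|\str c|$, you create the boundary problem $i_{k+1}=|\str c|+1$ yourself, and then need two further ideas to solve it: the max-tracking pairing to extract the running maximum $M$ (correctly identified as necessary, since $M$ can exceed the length of $\Iter_k(\varphi,\str a,\str c)$ itself), and the fixed-length sentinel encoding that suppresses all length revisions so that a second $\Iter_k$ call runs its full budget and serves as a decision procedure. Both constructions are sound --- the constant-length padding idiom is in the same spirit as the encodings the paper reserves for the genuinely harder reductions such as $\Iter_k\redP\Iter$ and $\Jter\redP\Jter_k$, and your version handles $\str c=\epsilon$ without special care --- but for this particular lemma they are dispensable: what they buy, namely a uniform way to detect whether the $(k+1)$-st record falls within the budget, is precisely what the paper's budget shift makes unnecessary.
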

\begin{proof}
We claim that $\Jter_k(\varphi, \str a, \str c)=\varphi(\Iter_k(\varphi, \str a,  \str c\gg 1))$. This is clear in the case that there are no more than $k$ length revisions in the evaluation of $\Iter_k(\varphi, \str a, \str b, \str c\gg 1)$, as any lookahead revision corresponds exactly to a preceding length revision,  and so
$\Jter_k(\varphi, \str a, \str c)= \varphi^{|\str c|}(\str a) =\varphi(\Iter_k(\varphi, \str a,  \str c\gg 1))$. Otherwise suppose that $\Iter_k(\varphi, \str a, \str b, \str c\gg 1)=\varphi^\ell(\str a)$, which means in particular that  $\ell$ is the minimum value less than $|\str c|$ such that evaluating $\varphi^{\ell+1}(\str a)$ results in $k+1$ length revisions. But then $\ell$ is the minimum value less than $|\str c|$ such that  $\varphi^{\ell+2}(\str a)$ results in $k+1$ length revisions. But this means 
$\Jter_k(\varphi, \str a, \str c)= \varphi^{\ell+1}(\str a) =\varphi(\Iter_k(\varphi, \str a,  \str c\gg 1))$.
\end{proof}
\begin{lemma} 
For any $k \ge 0$, $\Jter \redP \Jter_k$.
\end{lemma}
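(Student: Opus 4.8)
The plan is to reduce the claim to a single reduction $\Iter \redP \Jter_k$ and then close it off by transitivity: the equivalence $\Iter \equivP \Jter$ established earlier gives in particular $\Jter \redP \Iter$, so any reduction $\Iter \redP \Jter_k$ yields $\Jter \redP \Iter \redP \Jter_k$. Hence it suffices to represent the output-bounded iterator $\Iter$ using $\Jter_k$ together with poly-time type-1 functions. (Recall that $\Iter(\varphi,\str b,\str a,\str c)$ iterates the step $\str t \mapsto \lmin(\varphi(\str t),\str b)$, so every intermediate value it produces has length at most $|\str b|$.)

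The main obstacle is controlling the number of lookahead revisions so that it never exceeds the fixed constant $k$. A naive simulation would simply feed the successive values of $\Iter$ to $\Jter_k$ as the iteration argument; but although these values stay within length $|\str b|$, their lengths may fluctuate arbitrarily inside $[0,|\str b|]$, and each time the running maximum argument length increases a lookahead revision is incurred, so up to $|\str b|$ of them may occur. Since $|\str b|$ is not bounded by the constant $k$, $\Jter_k$ would in general stop the iteration prematurely. The key idea for overcoming this is to pad every intermediate value to one common, fixed length, so that no argument ever strictly exceeds a previous one and hence no lookahead revision is ever triggered.

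Concretely I would fix, for a given bound $\str b$, a poly-time encoding $E_{\str b}$ and decoding $D$ such that $D(E_{\str b}(\str t)) = \str t$ and $|E_{\str b}(\str t)| = |\str b|+1$ for every $\str t$ with $|\str t| \le |\str b|$; for instance $E_{\str b}(\str t) := \str t\,\sdone\,\sdzero^{|\str b| \cut |\str t|}$, with $D$ returning the prefix of its argument preceding its last $\sdone$. One then iterates the padded step
\[
\sigma := \lambda \str s.\, E_{\str b}(\lmin(\varphi(D(\str s)), \str b))
\]
started from $E_{\str b}(\lmin(\str a,\str b))$. A first induction shows that every argument handed to $\sigma$ during the run of $\Jter_k(\sigma, E_{\str b}(\lmin(\str a,\str b)), \str c)$ is of the form $E_{\str b}(\str t)$ with $|\str t| \le |\str b|$, hence has length exactly $|\str b|+1$; since all argument lengths coincide, no lookahead revision occurs and the iteration runs its full $|\str c|$ steps for every $k \ge 0$. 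A second, parallel induction shows that for all $\str c'$ one has $\sigma^{|\str c'|}(E_{\str b}(\lmin(\str a,\str b))) = E_{\str b}(\Iter(\varphi,\str b,\str a,\str c'))$, using that the bounded step keeps all values within length $|\str b|$ so that $E_{\str b}$ and $D$ always cancel. Decoding the final value therefore yields
\[
\Iter(\varphi, \str b, \str a, \str c) = D(\Jter_k(\sigma, E_{\str b}(\lmin(\str a, \str b)), \str c)),
\]
and since $E_{\str b}$, $D$, $\lmin$ and application of $\varphi$ are all available in $\lambda(\p \cup \{\Jter_k\})$, this establishes $\Iter \redP \Jter_k$ and, by the transitivity step above, the desired $\Jter \redP \Jter_k$.
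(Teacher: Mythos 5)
Your proof is correct, and its central insight is the same one the paper's proof turns on: use the explicit bound $\str b$ to pad the queries handed to the step function so that the very first query already has the maximal possible length $|\str b|+1$; then no later query can strictly exceed all earlier ones, no lookahead revision is ever triggered, and $\Jter_k$ performs all $|\str c|$ iterations regardless of $k$ (both arguments rely on the stated convention that the initial call does not count as a revision). Where you genuinely differ is in the decomposition and the bookkeeping. The paper simulates $\Jter$ with $\Jter_k$ directly: its step function $\psi$ marks values with a trailing flag bit ($\sdzero$ for the dummy initial value $\str b\sdzero$, $\sdone$ for genuine iterates of the form $\lmin(\cdot,\str b)\sdone$), and, since the last application of $\varphi$ in $\Jter$ is not output-bounded, it must finish with one extra application of $\varphi$ and a case distinction on $\str c=\epsilon$ (the auxiliary functional $\Iter''$). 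You instead prove $\Iter \redP \Jter_k$ and conclude by transitivity from $\Jter \redP \Iter$, which the paper established in Section 2; because all intermediate values of $\Iter$ are bounded in length by $|\str b|$, your uniform-length self-delimiting encoding $E_{\str b}$ with decoder $D$ keeps every query at length exactly $|\str b|+1$, so the induction needs no flag bit, no final unbounded call, and no case split. Your route buys a cleaner simulation at the price of leaning on the earlier equivalence $\Iter \equivP \Jter$; the paper's route is self-contained for $\Jter$ but pays for it with the $\Iter''$ step.
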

\begin{proof}
Unfortunately, the situation is a little less straightforward than we might hope, as $\Jter$ and $\Jter_k$ differ slightly in the way they do bounding. $\Jter_k$ expects queries to be bounded in length by previous queries while $\Jter$ uses an explicit bound $\str b$. Define $\psi$ as follows:
\[
\psi(\str t\mathtt{0},\str a, \str b) := \lmin(\str a, \str b)\mathtt{1}\quad\quad\quad\quad\psi(\str t\mathtt{1},\str a, \str b) := \lmin(\varphi(\str t), \str b)\mathtt{1}
\]

\noindent
{\it Claim}. For all $\str a, \str b, \str c$, $\Jter_k(\lambda \str t.\psi(\str t, \str a, \str b), \str b\mathtt{0}, \mathtt{0}\str c)=\lmin(\Jter(\varphi,\str a, \str b, \str c), \str b)\mathtt{1}$

\medskip

\noindent
To prove this claim, first note the in the iteration on the left, the first call to $\psi$ is $\str b\mathtt{0}$, of length $|\str b|+1$. All subsequent calls are clearly bounded by $|\str b|+1$. So there will be no lookahead revisions in this iteration and it remains to prove equality without consideration of the lookahead bound $k$. We use induction on $\str c$. When $\str c = \epsilon$,
\begin{align*}
\Jter_k(\lambda \str t.\psi(\str t, \str a, \str b), \str b\mathtt{0}, \mathtt{0}\str c) &=\Jter_k(\lambda \str t.\psi(\str t, \str a, \str b), \str b\mathtt{0}, \mathtt{0})\\
&=\psi(\Jter_k(\lambda \str t.\psi(\str t, \str a, \str b), \str b\mathtt{0}, \epsilon),\str a, \str b)\\
&=\psi(\str b0, \str a, \str b)\\
&=\lmin(\str a, \str b)1\\
&=\lmin(\Jter(\varphi,\str a, \str b, \str c), \str b)\mathtt{1}.
\end{align*}
Now assume that the claim holds for $\str c$. Then
\begin{align*}
\Jter_k(\lambda \str t.\psi(\str t, \str a, \str b), \str b\mathtt{0}, \mathtt{0}\str ci)
&=\psi(\Jter_k(\lambda \str t.\psi(\str t, \str a, \str b), \str b\mathtt{0}, \mathtt{0}\str c),\str a, \str b)\\
&=\psi(\lmin(\Jter(\varphi,\str a, \str b, \str c), \str b)\mathtt{1}, \str a, \str b)\\
&=\lmin(\varphi(\lmin(\Jter(\varphi,\str a, \str b, \str c), \str b)), \str b)\mathtt{1}\\
&=\lmin(\Jter(\varphi,\str a, \str b, \str ci), \str b)\mathtt{1}
\end{align*}
Now define $\Iter''$ as follows:
\[
\Iter''(\varphi,\str a, \str b, \str c)=\left\{\begin{array}{ll}
			\str a & \text{if $\str c = \epsilon$;}\\
			\varphi(\Jter_k(\lambda \str t.\psi(\str t, \str a, \str b), \str b\mathtt{0}, \mathtt{0}(\str c \gg 1))\gg 1) &\text{otherwise.}
\end{array}\right.
\]
First note that $\Iter'' \preceq_{\lambda P} \Jter_k$, as definition by cases is a poly-time operation. When $\str c=\epsilon$, $\Iter''(\varphi,\str a, \str b, \str c)=\str a=\Jter(\varphi,\str a, \str b, \str c)$. Otherwise, by the claim, 
\[
\Jter_k(\lambda \str t.\psi(\str t, \str a, \str b), \str b\mathtt{0}, \mathtt{0}(\str c \gg 1))\gg 1=\lmin(\Jter(\varphi,\str a, \str b, \str c\gg 1), \str b),
\]
so that
\[
\Iter''(\varphi,\str a, \str b, \str c)=\varphi(\lmin(\Jter(\varphi,\str a, \str b, \str c\gg 1), \str b))=\Jter(\varphi,\str a, \str b, \str c).
\]
\end{proof}

\medskip

\noindent
Putting everything together, we have a characterization which is the main result of this paper.
\begin{theorem}\label{main} For every $k \ge 0$, $\Iter \equivP \Iter_k \equivP \Jter_k \equivP \Jter$.

\end{theorem}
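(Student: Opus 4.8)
The plan is to observe that every substantive reduction needed has already been proved, so that the theorem reduces to assembling these reductions into a single directed cycle and invoking transitivity of $\redP$. Fix an arbitrary $k \ge 0$ (the families $\Iter_k$ and $\Jter_k$ are indexed by $k$, and the argument runs uniformly for each fixed value). From Lemmas~\ref{induction start} and~\ref{iteriter} I have $\Iter_k \redP \Iter$, the former covering $k = 0$ and the latter $k \ge 1$. The first lemma of the present section gives $\Jter_k \redP \Iter_k$, and the second gives $\Jter \redP \Jter_k$. Finally, the equivalence $\Iter \equivP \Jter$ established in Section~\ref{Cook Urquhart} yields in particular the reverse reduction $\Iter \redP \Jter$.

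I would then chain these four reductions into the cycle
\[
\Iter \redP \Jter \redP \Jter_k \redP \Iter_k \redP \Iter,
\]
which visits each of the four functionals exactly once. Since $\redP$ is transitive, any functional occurring in the cycle reduces to any other: following the arrows from one functional around to a second always yields a reduction, and following them the rest of the way around yields the converse. Concretely, $\Iter \redP \Jter \redP \Jter_k$ gives $\Iter \redP \Jter_k$ while $\Jter_k \redP \Iter_k \redP \Iter$ gives $\Jter_k \redP \Iter$, so $\Iter \equivP \Jter_k$; the equivalences $\Iter \equivP \Iter_k$ and $\Iter \equivP \Jter$ follow in the same way. As all four functionals are thus $\redP$-equivalent to $\Iter$, they are pairwise equivalent, which is precisely the stated chain.

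I expect no genuine obstacle at this final step: all the difficulty of the development is concentrated upstream, in the induction on $k$ of Lemma~\ref{iteriter} (which itself rests on the unwinding identity of Lemma~\ref{unwind}) and in the two lookahead lemmas of this section. The present theorem is purely a matter of closing the loop, and the only point requiring any care is to confirm that the four cited reductions genuinely compose — that is, that their shared functionals match up — so that transitivity applies verbatim.
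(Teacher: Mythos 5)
Your proposal is correct and matches the paper's intent exactly: the theorem carries no separate proof in the paper (it is stated as ``putting everything together''), and the intended argument is precisely your cycle $\Iter \redP \Jter \redP \Jter_k \redP \Iter_k \redP \Iter$ assembled from Lemmas~\ref{induction start} and~\ref{iteriter}, the two lookahead lemmas of Section~\ref{lookahead revision}, and the equivalence $\Iter \equivP \Jter$ of Section~\ref{Cook Urquhart}, closed up by transitivity of $\redP$. Your citations align with the paper's lemmas, including the correct split of $\Iter_k \redP \Iter$ between the $k=0$ and $k\ge 1$ cases, so there is nothing to add.
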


\OMIT
{\section{A more and a less efficient description}\label{eff}

The definition of the iterators $\Iter_k$ at the begining of Section \ref{} is somewhat implicit.
It is possible to directly translate it into recursion schemes.
This leads to a rather clunky definition that is not useful to see the reducibilities as one struggles to find bound for the recursion and following it without further care taken leads to heavy reevaluation.
We state it here as it may be helpful conceptually.
Define the \emph{$n$-th length revision number} $\lrn_n(\varphi, \str a)$ inductively by $\lrn_0(\varphi, \str a) := 0$ and
\[ \lrn_{n + 1}(\varphi, \str a) : \begin{cases}
  \lrn_n(\varphi, \str a) + 1 & \text{if a length-revision is encountered on the $(n+1)$-th iteration of $\varphi$ on $\str a$} \\
  \lrn_n(\varphi, \str a) & \text{otherwise.}
\end{cases} \]
That is, $\lrn_n(\varphi, \str a)$ is the number of $j\leq n$ such that a length revision is encountered in the $j$-th iteration of $\varphi$ on $\str a$, where the length revision in the $0$-th iteration that always occurs is not counted.
Then, the $k$-revision iterator $\Iter_k$ can alternatively be defined recursively via $\Iter_k(\varphi,\str a, \epsilon) = \str a$ and
\[
\Iter_k(\varphi, \str a, \str c i) := \begin{cases}
  \varphi(\Iter_k(\varphi, \str a, \str c)) & \text{ if } \lrn_{|\str c|}(\varphi, \str a) \leq k \\
  \Iter_k(\varphi, \str a, \str c)  & \text{ otherwise.} \end{cases} \]

As mentioned above, this recursion scheme is very inefficient when used directly.
The implementation of $\Iter_k$ by $\Iter$ given in Lemma~\ref{iteriter} is better but still requires considerable overhead, involving a bounded quanitification and bounded search at each step.
Following it directly to obtain an implementation of $\Iter_k$ will result in an computation that proves it to be basic feasible functional, i.e./ runs in polynomial-time but is still needlessly complicated.
}

\section{More efficient approaches}\label{eff}
The implementation of $\Iter_k$ by $\Iter$ given in Lemma~\ref{iteriter} requires considerable overhead, involving a bounded quanitification and bounded search at each step.
An implementation which directly follows this definition is poly-time, but is needlessly complex.
The following observation (which in this setting correpsonds to tail-recursion elimination) will simplify things considerably.
In particular, we note an alternate characterization of $\varphi^n$: $\varphi^0(\str a)=\str a$ and $\varphi^{n+1}(\str a)=\varphi^n(\varphi(\str a))$.
This leads to the following characterization of $\varphi^n_{!k}$.

\begin{lemma} For all $n,k \ge 0$ we have
\begin{align*}
\varphi^0_{!k}(\str a) &= \str a\\
\varphi^{n+1}_{!0}(\str a)&=\left\{\begin{array}{ll}
	\str a & \text{if $|\varphi(\str a)|>|\str a|$;}\\
	\varphi^n_{!0}(\varphi(\str a)) & \text{otherwise.}
\end{array}\right.\\
\varphi^{n+1}_{!(k+1)}(\str a)&=\left\{\begin{array}{ll}
	\varphi^n_{!k}(\varphi(\str a)) & \text{if $|\varphi(\str a)|>|\str a|$;}\\
	\varphi^n_{!(k+1)}(\varphi(\str a)) & \text{otherwise.}
\end{array}\right.
\end{align*}
\end{lemma}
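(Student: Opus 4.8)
The plan is to establish all three displayed identities simultaneously by induction on $n$, reading them as a statement about how a single leading application of $\varphi$ interacts with the revision budget $k$. The base case $n=0$ is immediate from $\varphi^0_{!k}(\str a)=\str a$, so the real work is in the inductive step, and I would isolate that work in one bookkeeping lemma about the number of length revisions.

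Concretely, for a starting value $\str x$ let $r_i(\str x)$ denote the number of length revisions among the first $i$ applications of $\varphi$ to $\str x$, so that $\varphi^n_{!k}(\str x)=\varphi^{\ell}(\str x)$ with $\ell$ the largest index $i\le n$ for which $r_i(\str x)\le k$. Writing $\delta=1$ when $|\varphi(\str a)|>|\str a|$ and $\delta=0$ otherwise, the entire lemma reduces to the additivity identity $r_{i+1}(\str a)=\delta+r_i(\varphi(\str a))$, valid for all $i\ge 0$: the leading application contributes $\delta$ to the revision count, and the remaining $i$ applications starting from $\str a$ are literally the first $i$ applications starting from $\varphi(\str a)$. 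Granting this, the three cases fall out by elementary arithmetic on the stopping index. If $\delta=1$ the leading application already spends one unit of budget, so with budget $0$ no iteration survives and the value is $\str a$, while with budget $k+1$ the tail runs with budget $k$ from $\varphi(\str a)$; if $\delta=0$ the leading application is free and the tail runs with the original budget from $\varphi(\str a)$. In both surviving cases the maximal index for $\str a$ is one more than the maximal index $\ell'$ for $\varphi(\str a)$, and the identity $\varphi^{\ell'+1}(\str a)=\varphi^{\ell'}(\varphi(\str a))$ noted just above delivers exactly the right-hand sides.

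The main obstacle is the additivity identity, and within it the case $\delta=0$, where the leading application does not increase the length. Here one must check that discarding the original starting value $\str a$ and restarting from $\varphi(\str a)$ does not alter the length-revision status of any later application. The delicate point is the reference length against which a revision is tested after the first step: I would verify that, from the second application onward, the test deciding whether an application counts as a length revision returns the same verdict in the $\str a$-iteration and in the $\varphi(\str a)$-iteration, so that the two revision sequences agree once the leading step is accounted for. Making this precise forces one to pin down the exact reading of a length revision and to track carefully the quantity it is compared against; the outer induction on $n$ and the budget arithmetic are then routine by comparison.
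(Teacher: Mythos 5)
Your overall strategy---peel off the leading application and do arithmetic on the revision budget---is in essence the same as the paper's, which organizes the bookkeeping as a double induction (on $k$, then on $n$); your factorization through the additivity identity $r_{i+1}(\str a)=\delta+r_i(\varphi(\str a))$ is a cleaner packaging of the same idea. The problem is that this identity is precisely where all the difficulty sits, you leave it unverified, and under the paper's literal definition of a length revision it is in fact \emph{false}. That definition requires a revision to exceed $|\str a|$ --- the length of the \emph{original} starting value --- as well as the lengths of all previous outputs, whereas the restarted count $r_i(\varphi(\str a))$ measures revisions against the baseline $|\varphi(\str a)|$. When $\delta=0$ and $|\varphi(\str a)|<|\str a|$ these baselines differ, and a later output whose length lies strictly between them counts as a revision for the restarted iteration but not for the original one. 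Concretely, take $\varphi$ with $\varphi(\mathtt{000})=\mathtt{0}$, $\varphi(\mathtt{0})=\mathtt{00}$, $\varphi(\mathtt{00})=\mathtt{00}$, and $\str a=\mathtt{000}$. Under the literal definition $\varphi^2_{!0}(\str a)=\mathtt{00}$, since no output ever exceeds $|\str a|=3$; but $\delta=0$ and $\varphi^1_{!0}(\varphi(\str a))=\varphi^1_{!0}(\mathtt{0})=\mathtt{0}$, because the single call returns $\mathtt{00}$ of length $2>1$ and so is a revision there. Thus $r_2(\str a)=0\neq 1=\delta+r_1(\varphi(\str a))$, and the displayed equation for $\varphi^{n+1}_{!0}$ fails as well.

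You actually anticipated this: your closing remark that the verification ``forces one to pin down the exact reading of a length revision'' is exactly right, but the proposal stops at the point where that reading must be fixed, and the lemma's truth depends on how it is fixed. The statement holds only under the reading in which a revision is an application $\varphi(\str t)$ whose output exceeds $|\str t|$ and all previous outputs --- equivalently, the baseline is the running maximum of the current (sub-)iteration, so the original $|\str a|$ is forgotten once outputs drop below it. Under that reading your additivity identity does hold (note that in the case $\delta=1$ it holds under either reading, since then $|\varphi(\str a)|\geq|\str a|$ and nothing is lost by forgetting $|\str a|$), and the rest of your argument goes through. To be fair, the paper's own proof has the same gap: the sentence asserting that when $|\varphi(\str a)|\le|\str a|$ the remaining $n$ calls still have $k+1$ revisions ``available'' silently resets the baseline to $|\varphi(\str a)|$. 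So your proposal is incomplete rather than wrongheaded, but the one step it defers is not routine: a complete proof must either adopt the local reading of revisions (and reconcile it with the definition used elsewhere in the paper) or take the displayed recursion itself as the definition of $\varphi^n_{!k}$.
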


\begin{proof}
  We prove by induction on $k$ that the claim holds for all $n$.
  When $k=0$, iteration stops (absolutely) if $|\varphi(\str a)|>|\str a|$, otherwise it proceeds to the next step.
  Now assume for $k$ that the claim holds for all $n$.
  We show that for $k+1$ it holds for all $n$, by induction on $n$.
  When $n=0$ this is immediate.
  Assume that it holds for $n$, and consider $\varphi^{n+1}_{!(k+1)}(\str a)$.
  Clearly, if $|\varphi(\str a)|\le|\str a|$, no length revision occurs on the first call, and so $k+1$ are still available for the remaining $n$ calls.
  Otherwise, only $k$ length revisions are available for the remaining calls.
\end{proof}
We also note that, implicit in the proof of Lemma~\ref{iteriter}, is an implementation which is also efficient -- in particular, if we ``unwind'' the induction, we are eventually left relying only on $\Iter_0$.
As described in \cite{KS}, \S 4.3, we can implement the resulting definition using a form of ``re-entrant'' recursion.
We may view the violation of the length-revision bound as triggering an exception, which may then be caught by an exception handler which re-starts the recursion at the point after the offending oracle call has taken place.

\section{Conclusions and Future Work}
We have provided a new linguistic characterization of the higher-order polynomial time via iteration schemes that restrict the number of times a step function, presented as an oracle, may return an answer or be presented an input which in length exceeds all previous answers (resp. queries).
The characterization and the methods used to prove it lead to a number of questions and potential directions for future research.

The characterization provided in this paper could be termed {\em intrinsic}, in that no external bounding is present in the iteration schemes $\Iter_k$ and $\Jter_k$.
The condition itself, however, appears to depend on the dynamics of a particular computation. On its face it is not a structural/syntactic restriction, as is usual in implicit computational complexity. This suggests two directions for further research.
The first is to investigate the possibility of statically deriving bounds on query revision.
The second is to investigate distinctions on how computational resources are bounded as suggested by this and related work, for example intrinsic versus extrinsic, dynamic versus static, and feasibly constructive versus non-feasibly constructive (an example of non-feasibly constructive bounding would be the second-order polynomials of \cite{KC}).
A related observation is that iteration with bounded query revision appears to be a generalization of non-size-increasing computation \cite{Hof}.
This apparent connection merits further investigation.

In \S\ref{eff} above, we begin to explore the interplay between familiar programming techiques from the implemenation of functional programming languages (e.g, tail-recursion elimination) with respect to the efficient implementation of our iteration schemes.
We have also noted that the introduction on control primitives (e.g., {\tt catch} and {\tt throw}) may be relevant to the characterization of complexity classes in this setting.
We note that such control operators have been shown in \cite{CCF94} to be relevant to the general characterization of {\em sequential} higher-order computation. Here we only scratch the surface.
Further investigation of these and related techniques in the context of linguistic characterizations of computational complexity could prove fruitful.

As noted at several points in our development, there are issues of finer-grained complexity that arise from our translations. This gives rise to natural questions on the efficiency, or syntactic complexity, of translations, which bear further investigation. 

Finally, while we have drawn an analogy between \otms with bounded query revision (as introduced in \cite{KS}) and certain recursion schemes, we have not investigated just how closely related they are.
While the equivalences proved in \cite{KS} and in this paper imply an equivalence for all the models, a direct proof would be very interesting in furthering our understanding of poly-time \otms.
It would be very rewarding if a simplified proof of the equivalence of \cite{KC} could be obtained in this setting.


\end{document}